\newtheorem{assumption}{Assumption}
\newcommand{\instance}{(M,N,\boldsymbol \mu)}
\newcommand{\PP}{\mathbb{P}}
\newcommand{\E}{\mathbb{E}}
\newcommand{\nn}{\nonumber}
\newcommand{\calP}{\mathcal{P}}
\newcommand{\OO}{\text{O}}
\newtheorem{conjecture}{Conjecture}
\newtheorem{lemma}{Lemma}
\newtheorem{remark}{Remark}
\newtheorem{theorem}{Theorem}
\newtheorem{definition}{Definition}
\def\BibTeX{{\rm B\kern-.05em{\sc i\kern-.025em b}\kern-.08em
    T\kern-.1667em\lower.7ex\hbox{E}\kern-.125emX}}
\begin{document}
\title{Decentralized Age-of-Information Bandits}

\author{\IEEEauthorblockN{Archiki Prasad}
\IEEEauthorblockA{\textit{Department of Electrical Engineering  } \\
\textit{Indian Institute of Technology Bombay}\\
archiki@iitb.ac.in}
\and
\IEEEauthorblockN{Vishal Jain}
\IEEEauthorblockA{\textit{Department of Electrical Engineering  } \\
\textit{Indian Institute of Technology Bombay}\\
vishalj409@gmail.com}
\and
\IEEEauthorblockN{Sharayu Moharir}
\IEEEauthorblockA{\textit{Department of Electrical Engineering  } \\
\textit{Indian Institute of Technology Bombay}\\
sharayum@ee.iitb.ac.in}
}

\maketitle
\thispagestyle{plain}
\pagestyle{plain}

\begin{abstract}
Age-of-Information (AoI) is a performance metric for scheduling systems that measures the freshness of the data available at the intended destination. AoI is formally defined as the time elapsed since the destination received the recent most update from the source. We consider the problem of scheduling to minimize the cumulative AoI in a multi-source multi-channel setting. Our focus is on the setting where channel statistics are unknown and we model the problem as a distributed multi-armed bandit problem. For an appropriately defined AoI regret metric, we provide analytical performance guarantees of an existing UCB-based policy for the distributed multi-armed bandit problem. In addition, we propose a novel policy based on Thomson Sampling and a hybrid policy that tries to balance the trade-off between the aforementioned policies. Further, we develop AoI-aware variants of these policies in which each source takes its current AoI into account while making decisions. We compare the performance of various policies via simulations.
\end{abstract}

\begin{IEEEkeywords}
Age-of-Information, Multi-Armed Bandits
\end{IEEEkeywords}

\section{Introduction}
We consider the problem of scheduling in a multi-source multi-channel system, focusing on the metric of Age-of-Information (AoI), introduced in \cite{kaul2011minimizing}. AoI is formally defined as the time elapsed since the destination received the recent most update from the source. It follows that AoI is a measure of the freshness of the data available at the intended destination which makes it a suitable metric for time-sensitive systems like smart homes, smart cars, and other IoT based systems. Since its introduction, AoI has been used in areas like caching, scheduling, and channel state information estimation. A comprehensive survey of AoI-based works is available in \cite{kosta2017age}. 

The work in \cite{bhandari2020age} shows the performance of AoI bandits for a single source and multiple channels, where the source acts as the ``bandit" which pulls one of the arms in every time-slot, i.e., selects one of the channels for communication. The aim is to find the best arm (channel) while minimizing the AoI (instead of the usual reward maximization) for that source. In this work, we address more practical scenarios in which systems have multiple sources looking to simultaneously communicate through a common, limited pool of channels. Here, we need to ensure that the total sum of the AoIs of all the sources is minimized. Moreover, we consider a decentralized setting where the sources cannot share information with each other, meaning that chances of collisions (attempting to communicate through the same channel in a time-slot) can be very high. Thus, the task of designing policies that minimize total AoI while ensuring fairness among all sources and avoiding collisions is a challenging problem. Prior works on decentralized multi-player MAB problems primarily discuss reward-based policies, however, minimizing AoI is more challenging, as the impact of sub-optimal decisions gets accumulated over time.

The decentralized system comprises of multiple sources and multiple channels, where at every time slot, each of the $M$ decentralized users searches for idle channels to send a periodic update. The probability of an attempted update succeeding is independent across communication channels and independent and identically distributed (i.i.d) across time-slots for each channel. AoI increases by one on each failed update and resets to one on each successful update. These distributed players can only learn from their local observations and collide (with a reward penalty) when choosing the same arm. The desired objective is to develop a sequential policy running at each user to select one of the channels, without any information exchange, in order to minimize the cumulative AoI over a finite time-interval of $T$ consecutive time-slots.

\subsection{Our Contribution}
\emph{Optimality of Round-Robin policy:} We describe an oracle Round-Robin policy and characterize its optimality.

\emph{Upper-Bound on AoI regret of DLF~\cite{OpportunisticSpectrum}:} We characterize a generic expression for the upper bound of the total cumulative AoI for any policy. Further, we show that the AoI regret of the DLF policy scales as $\OO(M^2 N \log^2 T)$.

\emph{New AoI-agnostic policies:} We propose a Thompson Sampling~\cite{thompson1933likelihood} based policy. We also present a new hybrid policy trading-off between Thompson Sampling and DLF.

\emph{New AoI-aware policies:} We propose AoI-aware variants for all the AoI-agnostic policies. When AoI values are below a certain threshold, the variants mimic the original policy. Otherwise, the variants exploit local past observations. Through simulations, we show that these variants exhibit lower AoI regrets as compared to their agnostic counterparts.

\subsection{Related Work}
In this section, we discuss the prior work most relevant to our setting. AoI-based scheduling has been explored by \cite{sombabu2018age, tripathi2017age, tripathi2019whittle, jhunjhunwala2018age, kadota2018optimizing}, where the channel statistics were assumed to be known and an infinite time-horizon steady-state performance-based approach is adopted. \cite{bhandari2020age} explores the setting where the channel statistics are unknown, for a single source and multiple channels. We consider a decentralized system with multiple sources, which is a much more complex setting.

Time Division Fair Sharing algorithm proposed in \cite{liu2010distributed} addresses the fair-access problem in a distributed multi-player setting. This policy was outperformed by the policy proposed in \cite{OpportunisticSpectrum}. Recent work in Multi-Player MABs (MPMABs) include \cite{besson2018multi, hanawal2018multi, boursier2019sic}. The policies in \cite{besson2018multi} consider an alternate channel allotment in the event of a collision. Although these ideas can be adopted in all our policies, in this work, we did not wish to make any assumptions about the feasibility of these alternate allotments. The settings in \cite{hanawal2018multi, boursier2019sic} are significantly different and cannot be readily adapted to our setting. Further, the policies may not be directly consistent with the AoI metric.
\begin{figure}[t!]
\centering
\includegraphics[trim=65 0 0 0,clip,scale=0.6]{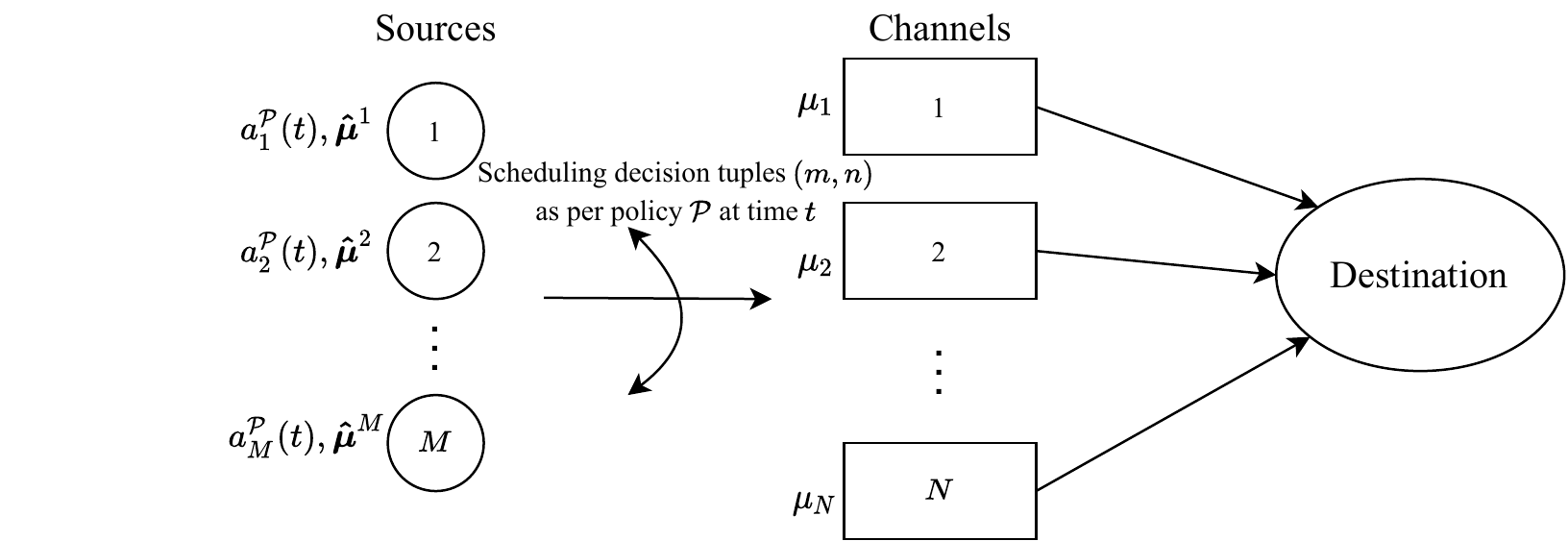}
\caption{Schematic of our system scheduling $M$ sources on $N$ channels based on local channel estimates at time-step $t$}
\label{fig:system}
\end{figure}
\section{Setting}
\subsection{Our System}
We consider a system with $M$ sources and $N$ channels ($N \geq M$). The sources track/measure different time-varying quantities and relay their measurements to a monitoring station via available communication channels. Time is divided into slots. In each time slot, each of the $M$ decentralized users (sources) selects an arm (channel) only based on its own observation histories under a decentralized policy, and then attempts to transmit via that channel.\footnote{We use the terms source and user, and arm and channel interchangeably} Each attempted communication via channel $i$ is successful with probability $\mu_i$ and unsuccessful otherwise, independent of all other channels and across time-slots. The values of $\mu_i$s are unknown to the sources. We use $\boldsymbol \mu$ to denote the channel transmission success probabilities (also denoted as $[ \mu_1; \mu_N ]$). Without loss of generality, we assume $\boldsymbol \mu$ to be in descending order, i.e., $\mu_1 > \mu_2 > ... > \mu_N$.  Figure~\ref{fig:system} shows a  schematic of our system in function.

A decentralized setting means that when a particular arm $i$ is selected by user $j$, the reward (whether the transmission was successful or not) is only observed by user $j$, and if no other user is playing the same arm, a reward is obtained with probability $\mu_j$. Else, if multiple users are playing the same arm (which is possible in a decentralized setting), then we assume that, due to collision, exactly one of the conflicting users can use the channel to transmit, while the other users get zero reward. This ``winner" is chosen at random since we assume all sources to have equal priority. This is consistent with network protocols like CSMA~\cite{kurose2010computer} with perfect sensing.

A \emph{scheduling algorithm} is run at time $t$ for a source $m$ independently by using the local channel estimate $\boldsymbol{\hat\mu}^m$ and the number of times a particular channel $n$ was utilized so far. The output of this algorithm is a channel on which the source $m$ can be scheduled (if no collisions). Our scheduling algorithms (referred to as policies) are discussed in Sections \ref{sec:analytic} and \ref{sec:mainResults}.

\subsection{Metric: Age-of-information Regret}

The age-of-information is a metric that measures the freshness of information available at the monitoring station. It is formally defined as follows. 
\begin{definition}[Age-of-Information (AoI)]
  Let $a(t)$ denote the AoI at the monitoring station in time-slot $t$ and $u(t)$ denote the index of the time-slot in which the monitoring station received the latest update from the source before the beginning of time-slot $t$. Then, 
$$a(t) = t- u(t).$$  
\end{definition}

By definition, 
\begin{align*}
a(t) = 
\begin{cases}
1 & \text{if the update in slot $t-1$ succeeds} \\
a(t-1) + 1 & \text{otherwise.}
\end{cases}
\end{align*}

Let $a^{\mathcal{P}}_{m}(t)$ be the AoI of source $m$ in time-slot $t$ under a given {\color{black}policy} $\mathcal{P}$, and let $a^*_m(t)$ be the corresponding AoI under the oracle policy, discussed in detail section~\ref{ssec:Oracle}. We define the AoI regret at time $T$ as the cumulative difference in expected AoI for the two policies in time-slots 1 to $T$ summed over all the sources. Hence, the lesser the AoI a policy accumulates, the better it is. All policies operate under Assumption~\ref{assumption:initialConditions}, similar to the initial conditions in \cite{bhandari2020age}.

\begin{definition}[Age-of-Information Regret (AoI Regret)]
	\label{defn:regret}
	AoI regret under {\color{black}policy} $\mathcal{P}$ is denoted by $\mathcal{R}_{\mathcal{P}}(T)$ and
\begin{align}
\label{eq:regret}
\mathcal{R}_{\mathcal{P}}(T) = \sum_{m=1}^{M}\sum_{t = 1}^{T} \E[a^{\mathcal{P}}_m(t) - a^*_m(t)].
\end{align}
\end{definition}
\begin{assumption}[Initial Conditions]
	\label{assumption:initialConditions}
	The system starts operating at time-slot $t = -\infty$, but for any policy, decision making begins at $t=1$. It does not use information from observations in time-slots $t \leq 0$ to make decisions in time-slots $t \geq 1$. 
\end{assumption}





\section{Analytical Results}
\label{sec:analytic}
In this section, we state and discuss our main theoretical results. We characterize our Oracle/Genie policy and prove its optimality. We also provide an upper-bound for the regret of a Upper Confidence Bound (UCB)~\cite{auer2002finite} based policy for multi-user and multi-armed setting, called Distributed\footnote{here distributed learning refers to decentralized learning via local estimates} Learning with Fairness (DLF)~\cite{OpportunisticSpectrum}.

\subsection{Oracle Policy}
We will consider two candidates for the Oracle policy, namely I.I.D and Symmetric $M$-Periodic policy described in Theorem~\ref{thm:iid} and Definition~\ref{defn:pol-type} respectively.
\begin{theorem}[I.I.D Policy]
Let $\mathcal{S}(k) = \{1,\cdots, k\}$ be the set of k channels with the highest success probabilities. Consider the set of all I.I.D policies, the policy that schedules the $M$ sources on a random permutation of the set of arms $\mathcal{S}(M)$ at every instant (uniformly at random) is optimal.
\label{thm:iid}
\end{theorem}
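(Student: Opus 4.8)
The plan is to reduce the per-source AoI to a single scalar, namely its per-slot success probability, and then solve a small convex optimization over these scalars.

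First I would observe that under any I.I.D.\ policy the assignment of sources to channels is drawn from a fixed distribution, independently across slots; consequently, for a fixed source $m$ the event ``$m$ obtains a successful update in slot $t$'' is Bernoulli with a time-invariant probability $p_m$, and these events are independent across slots (since both the assignment and the channel realizations are i.i.d.\ in $t$). Writing $p_m = \sum_c \PP[c]\, r_m(c)$, where the sum ranges over configurations $c$ and $r_m(c)$ is the success probability of $m$ under $c$ (equal to $\mu_i/k$ when $m$ shares channel $i$ with $k-1$ other sources, since the winner is chosen uniformly and then transmits with probability $\mu_i$, and equal to $\mu_i$ when $m$ is alone on channel $i$). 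Because the system has been running since $t=-\infty$ (Assumption~\ref{assumption:initialConditions}), the AoI of $m$ is stationary from slot $1$ onward, so $\PP[a_m(t)=j] = (1-p_m)^{j-1}p_m$ and hence $\E[a_m(t)] = 1/p_m$ for all $t$. This collapses the objective: minimizing the cumulative expected AoI over the I.I.D.\ class is equivalent to minimizing $\sum_{m=1}^M 1/p_m$.

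Next I would establish the capacity bound $\sum_{m=1}^M p_m \le \sum_{i=1}^M \mu_i$. Summing the per-slot success indicators, $\sum_m p_m$ equals the expected number of successful transmissions in a slot. In any single configuration at most one source succeeds on each occupied channel (exactly the randomly chosen winner), and it does so with probability equal to that channel's $\mu_i$; since there are only $M$ sources, at most $M$ distinct channels are occupied, so the expected number of successes in any configuration is at most $\sum_{i=1}^M \mu_i$, the sum of the $M$ largest success probabilities. Averaging over the configuration distribution gives the claim. Both collisions (which leave fewer than $M$ channels occupied) and the use of channels outside $\mathcal{S}(M)$ strictly reduce this quantity.

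Finally I would combine these via the AM--HM (equivalently Cauchy--Schwarz) inequality, $\sum_{m=1}^M 1/p_m \ge M^2 / \sum_{m=1}^M p_m \ge M^2 / \sum_{i=1}^M \mu_i$. The uniform random permutation of $\mathcal{S}(M)$ places each source on each of the top $M$ channels with probability $1/M$ and never collides, so $p_m = \frac{1}{M}\sum_{i=1}^M \mu_i$ for every $m$; this makes both inequalities tight, so the policy attains the lower bound and is optimal within the I.I.D.\ class. I expect the main obstacle to be the first step: arguing cleanly that the stationary expected AoI is exactly $1/p_m$ and that this reduction is valid under Assumption~\ref{assumption:initialConditions}. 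The capacity bound and the AM--HM step are then routine, with the equality analysis simultaneously forcing both the ``top-$M$'' and the ``no-collision'' features of the optimal policy.
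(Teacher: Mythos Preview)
Your proposal is correct and proceeds differently from the paper in its optimization step. Both you and the paper first reduce the expected AoI of source $m$ to $1/p_m$ via the i.i.d.\ structure and Assumption~\ref{assumption:initialConditions} (the paper's Lemma~\ref{lemma:effective-aoi}). From there the routes diverge. The paper first invokes separate lemmas (Lemmas~\ref{thm:no-col} and~\ref{thm:best-M}) to restrict attention to collision-free policies using only $\mathcal{S}(M)$, which turns the marginals $\lambda_{m,j}$ into a doubly stochastic matrix on $[M]\times[M]$, and then verifies the KKT conditions at the uniform point $\lambda_{m,j}=1/M$, appealing to convexity for sufficiency. Your route is more elementary: you keep collisions and sub-optimal channels in the model, absorb both into a single capacity inequality $\sum_m p_m \le \sum_{i=1}^M \mu_i$, and finish with AM--HM. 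This buys you a self-contained argument that does not need the two preliminary lemmas or any Lagrangian machinery; the ``no collisions'' and ``top-$M$ only'' features fall out of the equality analysis rather than being imposed up front. The paper's KKT route, on the other hand, makes the role of the doubly stochastic structure explicit, which connects more directly to the symmetric $M$-periodic policies studied later in the paper.
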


\begin{definition}[Symmetric $M$-Periodic Policy]
\label{defn:pol-type}  
Let $\mathcal{D}_{m}$(t) be an N-element set containing the number of times each channel is scheduled by the source $m$ up till time-period $t$, arranged in decreasing order. Any periodic policy $\mathcal{P}$, with period M, such that $$\forall m \in [M], \mathcal{D}_m(\tau) =  \mathcal{D}(\tau) \text{, for } \tau \in (t-M,t], \forall t >M $$ 

is termed as a \emph{Symmetric $M$-Periodic Policy}.
\end{definition}
All the symmetric $M$-periodic policies under our consideration are void of any collision between any two sources and uses only the best-$M$ arms. Thus, for all the symmetric $M$-periodic policies under our consideration, $\mathcal{D}(\tau)$ is an $M$-length vector. These two conditions are necessary for any optimal policy. We refer interested readers to Lemmas~\ref{thm:no-col} and~\ref{thm:best-M} in Section~\ref{sec:proof}. Next, in Definition~\ref{defn:oracle}, we provide a special case of the Symmetric $M$-Periodic policy called the Round-Robin Policy.  
\begin{definition}[Round-Robin Policy]
    \label{defn:oracle}
   For a problem instance $\instance$, consider the index set $\mathcal{I}$ which is a random permutation of the arms in the set $\mathcal{S}(M)$. Then, in time-slot $t$, the round-robin policy schedules a source $m$ on the channel $\mathcal{I}_{((m + t)\bmod{M}) + 1}$.
\end{definition}
A symmetric $M$-periodic policy can be uniquely characterized by the sequence ${ \mathcal{D}(1),\cdots,\mathcal{D}(M)}$. By definition, the round-robin policy  is a symmetric $M$-periodic policy, and satisfies the property that each $\mathcal{D}(i)$, for $i \in [M]$, contains only 1s and 0s, and $\mathcal{D}(M) = (1,1,\cdots,1)$. Finally, Theorems~\ref{thm:oracele-m2}-\ref{thm:rr-vs-iid} characterize the optimality of the Round-Robin policy under certain conditions. This leads us to Conjecture~\ref{conj:opt}, where we generalize the Oracle to be optimal when there are more than two simultaneous sources.

\begin{algorithm}[!b]
	\DontPrintSemicolon 
    \textbf{Initialize:} Set $\hat{\mu}_n^m=0$ to be the estimated success probability of Channel $n$, $T_n^m(1)=0\;n\in[N]$.\;
	\While{$1\leq t\leq N$}{
	Schedule an update to a link $n(t)$ such that\;
	$n(t) = ((m + t) \bmod M) + 1$,\;
	Receive reward $X_{n(t)}^m(t)\sim \text{Ber}(\mu_{n(t)}^m)$\;
	$\hat{\mu}_{n(t)}^m = X_{n(t)}^m(t)$,\;
	$T_{n(t)}^m(t) = 1$\;
	$t = t + 1$\;}

	\While{$t\geq N+1$}{
	Set index $k = ((m + t) \bmod M) + 1$\;
	Let the set $\mathcal{O}_k$ contain the $k$ arms with the $k$ largest values in \eqref{eq:slk}
	\begin{equation}
	    \hat{\mu}_n^m + \sqrt{\frac{2\log t}{T_n^m(t-1)}}
	    \label{eq:slk}
	\end{equation}
	Schedule an update to a link $n(t)$ such that $$ n(t)=\arg \min_{n\in \mathcal{O}_k}\hat{\mu}_n^m - \sqrt{\frac{2\log t}{T_n^m(t-1)}}$$
	
	\uIf{Channel Acquired}{
	Receive reward $X_{n(t)}^m(t)\sim \text{Ber}(\mu_{n(t)}^m)$\;
    
	$\hat{\mu}_{n(t)}^m=\frac{\hat{\mu}^m_{n(t)}\cdot T_{n(t)}^m(t-1)+X_{n(t)}^m(t)}{T_{n(t)}(t-1)+1}$\;
	$T_{n(t)}^m(t)=T_{n(t)}^m(t-1)+1$\;}
	\Else{
	$\hat{\mu}_{n(t)}^m= \hat{\mu}_{n(t)}^m$ \text{(no update)}\;
	$T_{n(t)}^m(t)=T_{n(t)}^m(t-1)$\;
	}
	$t=t+1$}
	    
	\caption{{ Distributed Learning Algorithm with Fairness for $M$ Sources and $N$ Channels at Source $m$ }(DLF)}
	\label{algo:DLF}
\end{algorithm}

\begin{theorem}[Optimality of Round-Robin Policy for $M=2$]
For any problem instance $(M,N,\boldsymbol \mu)$ such that $M = 2, N \geq M$, a policy that schedules source $m \in [M]$ on the channel $((m + t)\bmod{M}) + 1$ is optimal. 
\label{thm:oracele-m2}
\end{theorem}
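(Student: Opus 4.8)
The plan is to start from the structural necessary conditions already isolated in Lemmas~\ref{thm:no-col} and~\ref{thm:best-M}: any optimal policy is collision-free and only ever uses the two best channels $\mathcal{S}(2)=\{1,2\}$. For $M=2$ this pins the configuration in each slot down to a single binary choice — in slot $t$ one source occupies channel $1$ and the other occupies channel $2$ — so a deterministic feasible policy is nothing more than a binary sequence $\{x_t\}$ recording which source sits on channel $1$. Because the objective $\sum_{m,t}\E[a_m(t)]$ is affine in the policy's randomization (a randomized policy is a mixture of deterministic ones, and the minimum of an affine functional is attained at a vertex), it suffices to minimize over deterministic policies, i.e. over such binary sequences, and then to verify that the schedule $((m+t)\bmod 2)+1$ realizes the alternating sequence $1,2,1,2,\dots$, which is manifestly collision-free.

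Next I would reduce the objective to an explicit functional of $\{x_t\}$. Writing $q_i=1-\mu_i$ and letting $\sigma_s\in\{q_1,q_2\}$ be the failure probability experienced by source $1$ in slot $s$ (so source $2$ experiences the complementary value $\bar\sigma_s$ with $\sigma_s+\bar\sigma_s=q_1+q_2$ and $\sigma_s\bar\sigma_s=q_1q_2$), the AoI recursion $\E[a(t)]=1+\sigma_{t-1}\E[a(t-1)]$ unrolls to
\begin{equation*}
\E[a_1(t)]=\sum_{k\ge0}\prod_{j=1}^{k}\sigma_{t-j},\qquad \E[a_2(t)]=\sum_{k\ge0}\prod_{j=1}^{k}\bar\sigma_{t-j}.
\end{equation*}
Summing over both sources, the total expected AoI becomes $\sum_{t}\sum_{k\ge0}\big(P_{t,k}+\bar P_{t,k}\big)$ with $P_{t,k}=\prod_{j=1}^{k}\sigma_{t-j}$ and $\bar P_{t,k}=\prod_{j=1}^{k}\bar\sigma_{t-j}$. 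The crucial observation is that, term by term, the product is assignment-independent: $P_{t,k}\,\bar P_{t,k}=(q_1q_2)^k$, since each factor pairs a $q_1$ with a $q_2$.

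The heart of the argument is then a convexity/balancing step. For a fixed product $c=(q_1q_2)^k$, the map $P\mapsto P+c/P$ is minimized at $P=\sqrt{c}$ and increases as $P$ moves away in either direction; since $P_{t,k}=q_1^{r}q_2^{k-r}$ when source $1$ lands on channel $1$ in exactly $r$ of the preceding $k$ slots, the sum $P_{t,k}+\bar P_{t,k}$ is minimized precisely when $r$ is as close to $k/2$ as possible, i.e. when the window is as balanced as possible. The alternating sequence $1,2,1,2,\dots$ has the special property that \emph{every} contiguous window of length $k$ contains $\lfloor k/2\rfloor$ or $\lceil k/2\rceil$ of each symbol, so it simultaneously minimizes every term $P_{t,k}+\bar P_{t,k}$ over all $t$ and $k$. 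Hence it minimizes the entire double sum, so the round-robin policy is optimal; the strict ordering $\mu_1>\mu_2$ makes the improvement over any non-alternating sequence strict, and the window-of-length-$2$ constraint forces alternation, matching the two phases of the permutation $\mathcal{I}$.

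I expect the principal difficulties to be bookkeeping rather than conceptual. First, Assumption~\ref{assumption:initialConditions} makes the inner sum over $k$ reach into pre-horizon slots $t-j\le0$, so I must argue either that the oracle's own past is alternating (hence its boundary windows are also balanced) or that a common initial AoI at $t=1$ contributes an identical balanced term to every policy; this is what legitimizes the term-wise minimization at the left boundary and not merely in the bulk, together with convergence of the series since $q_i<1$. Second, I must confirm that no randomized or history-dependent policy can beat the deterministic optimum, which follows from the affine-in-randomization reduction but should be stated carefully. The single combinatorial fact I would prove explicitly is that the alternating schedule attains the near-balanced count in every window simultaneously, since the whole optimality claim rests on this simultaneous per-window minimization.
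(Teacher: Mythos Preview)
Your proposal is correct and takes a genuinely different route from the paper. The paper proceeds by a local interchange argument: it fixes the round-robin schedule $\calP_A$ and a comparator $\calP_B$ that agrees with $\calP_A$ except on a single $2$-slot block $[k_1+1,k_1+2]$ where a source repeats a channel, then expands $\E[a_m^{\calP_B}(T)]-\E[a_m^{\calP_A}(T)]$ via \eqref{eq:aoi-defn} and verifies nonnegativity by explicit case analysis on the parity of $k_2$ (equations \eqref{eq:case-1}--\eqref{eq:case-2}). Your approach instead minimizes the entire double sum \emph{termwise}: the observation that $P_{t,k}\bar P_{t,k}=(q_1q_2)^k$ is schedule-independent, together with the convexity of $P\mapsto P+c/P$, shows each summand $P_{t,k}+\bar P_{t,k}$ is smallest when the window $[t-k,t-1]$ is as balanced as possible, and alternation is the unique sequence achieving this simultaneously for every $t,k$. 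This buys you a cleaner, one-shot argument that directly covers all deterministic, randomized, and adaptive policies (since the fixed-product identity holds pointwise, not merely in expectation), whereas the paper's single-block comparison needs an implicit iteration to reach arbitrary competitors. Conversely, the paper's decomposition into a ``common'' part and a ``deviating block'' is what it later reuses with Muirhead's inequality to prove Theorem~\ref{thm:rr-best-smp} for general $M$; your per-window balance argument is sharper for $M=2$ but does not obviously extend, since for $M\ge3$ the product $\prod_m P^{(m)}_{t,k}$ is still fixed but the sum of $M$ terms with fixed product is no longer minimized by a single balance condition. Your flagged boundary issue under Assumption~\ref{assumption:initialConditions} is real but benign: since the system runs from $t=-\infty$ and the pre-horizon schedule is common to all candidate policies, windows reaching into $t\le0$ contribute identically and drop out of the comparison.
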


\begin{theorem}[Round-Robin is the best Symmetric $M$-Periodic Policy]
\label{thm:rr-best-smp}
For any problem instance $\instance$, there does not exist a symmetric $M$-periodic policy with a larger expected total age-of-information value than the round-robin policy.
\end{theorem}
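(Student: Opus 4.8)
The plan is to show that round-robin attains the \emph{minimum} expected total age among all symmetric $M$-periodic policies, i.e.\ that no such policy does strictly better. Since every symmetric $M$-periodic policy is periodic with period $M$, it suffices to compare the steady-state per-period total age. First I would invoke Lemmas~\ref{thm:no-col} and~\ref{thm:best-M} to restrict attention to policies that are collision-free and use only the best $M$ arms, so that at every slot the $M$ sources occupy the arms of $\mathcal{S}(M)$ in some permutation $\sigma_t$. Writing $q_i=1-\mu_i$ for the failure probabilities and letting $c_m(t)$ denote the arm used by source $m$ at slot $t$, the expected age of source $m$ admits the renewal-type expansion
\[
\E[a_m(t)] \;=\; \sum_{k\ge 0}\ \prod_{j=1}^{k} q_{c_m(t-j)},
\]
so the per-period total age is $\sum_{m=1}^{M}\sum_{t=1}^{M}\E[a_m(t)]=\sum_{k\ge 0}G_k$, where $G_k=\sum_{t=1}^{M}\sum_{m=1}^{M}\prod_{j=1}^{k}q_{c_m(t-j)}$.

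The structural point I would establish next is that the low-order terms are policy-independent: $G_0=M^2$, and because at every slot the full set $\mathcal{S}(M)$ is used exactly once across the sources, $G_1=M\sum_{i=1}^{M}q_i$. Hence two symmetric $M$-periodic policies can differ only through the higher-order terms $G_k$ with $k\ge 2$, which couple consecutive permutations; for instance $G_2=\sum_{t}\sum_{a}q_a\,q_{\phi_t(a)}$ with $\phi_t=\sigma_{t-2}\circ\sigma_{t-1}^{-1}$, and analogously for longer failure runs. Recalling that a symmetric $M$-periodic policy is determined by its profile $\mathcal{D}(1),\dots,\mathcal{D}(M)$ and that round-robin is exactly the all-ones profile (each source visits each arm of $\mathcal{S}(M)$ once per period), the goal reduces to showing that the all-ones profile minimizes $\sum_{k\ge 2}G_k$.

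For the core argument I would use an interchange argument together with the convexity of $\mu\mapsto 1/\mu=1/(1-q)$. Starting from any symmetric $M$-periodic policy whose profile is not all-ones, some source repeats an arm and skips another within a period; by the column-sum constraint this is mirrored across sources, so I would perform a balancing swap that moves the common count vector one majorization step closer to $(1,\dots,1)$, re-symmetrize using the freedom in the random index permutation $\mathcal{I}$, and show the per-period total age does not increase; iterating then drives any policy to round-robin. The hard part will be proving the monotonicity of a single balancing swap: because the age depends on the \emph{temporal arrangement} and not merely on the usage counts, $G_2$ in isolation is minimized by anti-sorted rather than cyclic permutations, so a term-by-term rearrangement does not single out round-robin. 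The crux is therefore to control the higher-order terms jointly and show that whatever a concentrated profile saves on the sources placed on strong arms is always outweighed—via the convexity of $1/\mu$—by the penalty inflicted on the sources pushed onto weaker arms; the sign of each swap ultimately reduces to the two-source, two-channel comparison underlying Theorem~\ref{thm:oracele-m2}, which anchors the claim that the total age is Schur-convex in the common count vector with the majorization-minimal all-ones vector as its minimizer.
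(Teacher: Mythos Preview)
Your majorization/Schur-convexity intuition is the right one, but the interchange program you outline leaves the very step you flag as ``the hard part'' unresolved, and the paper's proof shows that this step can be bypassed entirely rather than fought through.

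The key observation you are missing is what the randomization over the index permutation $\mathcal{I}$ actually buys. You mention it only as a device to ``re-symmetrize'' after a swap, but in the paper it does all the work up front: once you average over all permutations of $\mathcal{S}(M)$, each tail-product term
\[
\E\Big[\prod_{i=k_2}^{\tau}\big(1-\mu_{k_m(t-i)}\big)\Big]
\]
becomes exactly the Muirhead $[\mathcal{D}(\tau-k_2+1)]$-mean of the numbers $1-\mu_1,\dots,1-\mu_M$, where $\mathcal{D}(\ell)$ is the common usage-count vector after $\ell$ slots. For round-robin this vector is $(1,\dots,1,0,\dots,0)$ with $\ell$ ones, and any other symmetric $M$-periodic policy (collision-free, best-$M$) produces an integer vector with the same sum $\ell$ that majorizes it. Muirhead's inequality then gives the term-by-term comparison immediately, and summing over $\tau$ and $m$ finishes the proof. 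No swaps, no induction, no need to control the $G_k$ jointly.

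This also explains why your worry about ``anti-sorted versus cyclic'' permutations evaporates: after averaging over $\mathcal{I}$, the expected tail product depends only on the multiset of usage counts $\mathcal{D}(\ell)$, not on the temporal order in which arms were visited. So the quantity you are trying to show is Schur-convex in the count vector is literally a Muirhead mean, for which Schur-convexity is the content of Muirhead's inequality. Your convexity-of-$1/\mu$ route and appeal to the $M=2$ case are unnecessary detours around a one-line classical inequality.
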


\begin{theorem}[I.I.D. vs. Round-Robin]
\label{thm:rr-vs-iid}
The round-robin policy has a smaller expected total age-of-information value than the I.I.D. policy.
\end{theorem}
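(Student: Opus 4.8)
The plan is to reduce the total-AoI comparison to a single-source, stationary comparison and then exploit a symmetric-function inequality. First I would argue that under each policy the stationary AoI process of every source is statistically identical (by the cyclic symmetry of round-robin and the exchangeability of the i.i.d.\ assignment), so the expected total AoI equals $M$ times the stationary expected AoI of one source. For the I.I.D.\ policy the key observation is that, for a fixed source, the per-slot channel is uniform on $\{1,\dots,M\}$ and the draws are independent across slots; marginalizing over the channel, the success indicator in each slot is i.i.d.\ $\mathrm{Bernoulli}(\bar\mu)$ with $\bar\mu=\tfrac1M\sum_{i=1}^M\mu_i$. Writing the age as a sum of survival indicators, $a(t)=\sum_{j\ge0}\mathbbm{1}\{\text{slots } t{-}1,\dots,t{-}j \text{ all fail}\}$, and taking expectations gives $\E[a^{\text{IID}}]=\sum_{j\ge0}\bar g^{\,j}=1/\bar\mu$, where $\bar g=1-\bar\mu$.

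For round-robin I would use the same survival-indicator expansion, now taking expectation over the Bernoulli coins (independent across slots), over the phase of the reference slot (uniform over the $M$ positions of a period), and over the random permutation $\mathcal I$. The $j$-th term is then the probability of $j$ consecutive failures. For $j\le M$ the $j$ preceding slots use $j$ \emph{distinct} channels sampled without replacement, so after symmetrizing this term equals the elementary symmetric mean $p_j:=e_j(\mathbf g)/\binom{M}{j}$ of the failure probabilities $g_i=1-\mu_i$; for $j>M$ the cyclic window wraps, and writing $j=qM+r$ the term factorizes as $e_M^{\,q}\,p_r$. Hence $\E[a^{\text{RR}}]=\sum_{j\ge0}\text{term}_j^{\text{RR}}$ with these explicit terms.

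The comparison then reduces to the term-by-term bound $\text{term}_j^{\text{RR}}\le \bar g^{\,j}=\text{term}_j^{\text{IID}}$, which is exactly Maclaurin's inequality: for positive reals the symmetric means obey $p_k^{1/k}\le p_1=\bar g$, i.e.\ $p_k\le \bar g^{\,k}$, with equality iff all $g_i$ coincide. Since $e_M=p_M\le\bar g^{\,M}$, multiplying yields $e_M^{\,q}p_r\le \bar g^{\,qM+r}=\bar g^{\,j}$ for every $j$. Summing over $j\ge0$ gives $\E[a^{\text{RR}}]\le\E[a^{\text{IID}}]$, and because the $\mu_i$ (hence the $g_i$) are distinct the inequality is strict already at $j=2$, so it is strict overall; multiplying by $M$ delivers the claim.

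I expect the main obstacle to be establishing the $j$-th round-robin term correctly, i.e.\ justifying that averaging over the phase and over $\mathcal I$ converts the product of $j$ consecutive (cyclically ordered) failure probabilities into the symmetric mean $p_j$. This is precisely where the random permutation in the definition is essential: for a fixed, adversarial cyclic order the product of consecutive elements can \emph{exceed} $\bar g^{\,j}$ (for instance when the largest $g_i$ are clustered together), so a naive single-permutation bound fails, and only after symmetrizing over $\mathcal I$ does Maclaurin's inequality apply. Secondary bookkeeping would be the wrap-around decomposition for $j>M$ and confirming convergence of the geometric series, which holds since $\bar g<1$.
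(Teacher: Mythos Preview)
Your proposal is correct and takes a genuinely different route from the paper.

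The paper proceeds by a hybrid argument: it couples round-robin with a policy $\mathcal P_B$ that is round-robin everywhere except on a single $M$-length window where it behaves like the i.i.d.\ policy. It then partitions the $M^{l}$ possible i.i.d.\ realizations in that window by their occupancy vector $\mathcal D^s$ and, for each occupancy type, invokes the comparison already established in Theorem~\ref{thm:rr-best-smp} (which rests on Muirhead's inequality). Averaging over types yields $\E[a^{\mathcal P_B}]\ge\E[a^{\mathcal P_A}]$. So the paper's proof is essentially a corollary of the symmetric-$M$-periodic result, routed through Muirhead.

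Your argument is more direct and self-contained: you compute both stationary expected AoIs explicitly via the survival expansion $\E[a(t)]=\sum_{j\ge 0}\Pr(j\text{ consecutive failures})$, identify the round-robin terms as elementary symmetric means $p_j$ (and $e_M^q p_r$ after wrap-around), and compare term-by-term to $\bar g^{\,j}$ using Maclaurin's inequality. This avoids any dependence on Theorem~\ref{thm:rr-best-smp} and immediately delivers strict inequality when the $\mu_i$ are distinct. Two small remarks: (i) the phase-averaging you mention is in fact unnecessary once you average over the random permutation $\mathcal I$, since for a uniformly random permutation any fixed window of $j$ cyclically consecutive positions already has a uniform $j$-subset marginal; (ii) your observation that a \emph{fixed} cyclic order can violate the termwise bound is exactly right, and it is the reason the paper's Definition~\ref{defn:oracle} randomizes $\mathcal I$.
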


\begin{conjecture}
Generalizing Theorem~\ref{thm:oracele-m2} (for $M > 2$), over all possible permutation of arms in the set $\mathcal{S}(M)$, the round-robin policy is optimal for any problem instance $\instance$.
\label{conj:opt}
\end{conjecture}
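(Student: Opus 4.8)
The plan is to reduce the space of candidate policies, rewrite the objective as a sum of per-source time-averaged ages, and then show that the uniform channel-usage profile realized by round-robin minimizes this sum through a convexity/majorization argument. First I would invoke Lemmas~\ref{thm:no-col} and~\ref{thm:best-M} to restrict attention to collision-free policies that only ever use the best-$M$ channels; such a policy is exactly a sequence of permutations $\pi_t$ assigning the $M$ sources to the channels $\{1,\dots,M\}$, with source $m$ experiencing failure probability $q_{\pi_t(m)}$, where $q_i := 1-\mu_i$ and $q_1<\cdots<q_M$. Using Assumption~\ref{assumption:initialConditions} to place every source in steady state, the defining recursion for AoI gives
\[
\E[a_m(t)] = 1 + \sum_{k\ge 1}\ \prod_{j=1}^{k} q_{\pi_{t-j}(m)},
\]
so that, up to the common additive constant $MT$, the regret comparison is governed by $\sum_m\sum_t\sum_{k\ge1}\prod_{j=1}^k q_{\pi_{t-j}(m)}$.

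A tempting first approach groups this triple sum by the contiguous window $\mathcal{W}=[t-k,t-1]$ and minimizes, for each window separately, the quantity $\sum_m \prod_{s\in\mathcal{W}} q_{\pi_s(m)}$; by Birkhoff--von Neumann the usage matrices achievable on a single window of length $\ell$ are exactly the nonnegative integer matrices with all margins equal to $\ell$, and round-robin always realizes the most balanced such matrix. I would \emph{not} pursue this route, because it provably cannot work: round-robin does not minimize short-window costs. For instance, with $M=4$ the policy that locks sources $\{1,2\}$ onto channels $\{1,3\}$ and sources $\{3,4\}$ onto channels $\{2,4\}$ strictly beats round-robin on length-$2$ windows, yet is beaten by round-robin on length-$4$ windows, where $2q_1^2q_3^2+2q_2^2q_4^2 \ge 4q_1q_2q_3q_4$ by AM-GM. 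The per-window minima are therefore not simultaneously attainable, so the trade-off between short and long windows forces an aggregate rather than term-by-term inequality.

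Accordingly, the core of the proof I would write reorganizes the objective by source: for a periodic collision-free policy the quantity above equals $\sum_{m}\bar a_m-M$, where $\bar a_m$ is the time-averaged steady-state age of source $m$, and collision-freeness forces the matrix of long-run channel frequencies $f^{(m)}_i$ to be doubly stochastic, $\sum_i f^{(m)}_i=\sum_m f^{(m)}_i=1$. Round-robin corresponds to the uniform profile $f^{(m)}_i\equiv 1/M$. I would then argue in two stages: (i) for a fixed frequency profile the spread-out round-robin-style interleaving minimizes each individual $\bar a_m$; and (ii) among all doubly-stochastic profiles the uniform one minimizes $\sum_m\bar a_m$. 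Stage (ii) uses the strict monotonicity $q_1<\cdots<q_M$ for strictness, and it is morally a convexity statement: in the degenerate case of fixed assignments each source contributes $1/(1-q_i)$, a strictly convex function of the channel quality, so equalizing the quality each source experiences can only decrease the total. This is exactly the phenomenon already visible for $M=2$, where round-robin's time-averaged total $\frac{2+q_1+q_2}{1-q_1q_2}$ undercuts the fixed-assignment value $\frac{1}{1-q_1}+\frac{1}{1-q_2}$ by precisely $(q_1-q_2)^2/[(1-q_1)(1-q_2)(1-q_1q_2)]$.

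The main obstacle is making stage (ii) rigorous for general $M$ together with the reduction to periodic policies. Unlike the $M=2$ case (Theorem~\ref{thm:oracele-m2}), and unlike the comparisons against symmetric $M$-periodic policies (Theorem~\ref{thm:rr-best-smp}) or the I.I.D. policy (Theorem~\ref{thm:rr-vs-iid}), here a single source genuinely prefers to concentrate on the best channel, so the optimality of uniformity emerges only from the interaction of the convex age functional with the doubly-stochastic constraint; establishing the required Schur-convexity of $\sum_m\bar a_m$ in the frequency matrix is delicate because $\bar a_m$ is not a simple function of $f^{(m)}$ but also depends on the interleaving through the cross terms $q_iq_j$. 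I would first settle the uniform-versus-concentrated comparison via a two-source, two-channel exchange argument, transferring one unit of usage between a pair of sources and a pair of channels while tracking the sign of the resulting change, then iterate such swaps to drive any profile toward uniform, and finally remove the periodicity restriction through a Cesàro-averaging argument that dominates any (possibly history-dependent) policy by a periodic one of the same long-run frequencies.
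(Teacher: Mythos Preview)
The paper does not prove this statement: it is explicitly labeled a \emph{conjecture}, and the conclusion calls the general-$M$ optimality ``still an open problem.'' There is therefore no paper proof to compare against; your proposal has to stand on its own.

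Your outline has the right architecture---restrict via Lemmas~\ref{thm:no-col} and~\ref{thm:best-M}, express the objective as sums of failure-probability products, then optimize over doubly-stochastic usage profiles---and you correctly identify and discard the na\"ive per-window minimization. But the two-stage scheme does not close, because stages (i) and (ii) cannot be decoupled. In stage (i) you assert that for a fixed frequency vector $f^{(m)}$ the ``spread-out'' interleaving minimizes $\bar a_m$; yet for a non-uniform $f^{(m)}$ it is not even clear what the optimal interleaving is, and whatever it is, it is a \emph{different} schedule than the one you need in stage (ii). Consequently the object you are really minimizing in stage (ii) is $f\mapsto\min_{\text{interleaving}}\sum_m\bar a_m(f)$, an infimum over schedules, not the clean monomial-product functional on which Muirhead/Schur arguments act. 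Establishing Schur-convexity of that infimum is precisely the content of the conjecture, not a lemma on the way to it.

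The local exchange you sketch inherits the same problem: swapping one unit of usage between two sources and two channels alters the interleaving for both sources simultaneously, so the sign of $\Delta(\bar a_{m_1}+\bar a_{m_2})$ depends on the surrounding schedule and does not reduce to an isolated $2\times2$ computation. And the Ces\`aro reduction to periodic policies is not innocuous: a history-dependent policy need not have limiting frequencies, and even when it does, there is no general principle that a periodic policy with the same frequencies has no larger time-averaged AoI---indeed, your own stage (i) is the claim that among policies with identical frequencies the interleaving still matters. These obstacles are why the paper stops at Theorems~\ref{thm:oracele-m2}--\ref{thm:rr-vs-iid} and leaves the general statement open.
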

Without loss of generality, hereafter we will use set $\mathcal{S}(M)$ as our pre-decided index set $\mathcal{I}$, as described in Definition~\ref{defn:oracle}, in all the algorithms discussed in Section~\ref{sec:mainResults}. Further, the regret of all the policies is calculated with respect to the round-robin policy, that is, the round-robin policy is our Oracle. Standard single-source bandit policies try to pick the best-estimated arm. Since we have multiple sources with equal priority, we try to pick the ``$k^{th}$ best" arm, where $k$ changes in a round-robin fashion for each source, thus ensuring fairness. However, this can still lead to collisions, as the estimates of the channel means are independently maintained by each source, and no information is shared. Thus, we check at every instant if the channel is acquired by the source (based on our collision model), and update the mean estimates only when the source has access to the channel. The regret for any policy $\mathcal{P}$ arises if a source $m$, for an index $k$, chooses an arm other than the desired one (which is used by the oracle) or due to a collision, the source is unable to acquire the channel. There is a trade-off between exploration and the number of collisions, since exploiting a smaller pool of channels increases the likelihood of collisions and vice-versa.

\subsection{Distributed Learning with Fairness (DLF)}
DLF is an equal-priority multi-source and multi-channel policy based on UCB-1. This policy tries to mimic our Oracle policy while avoiding collisions, by estimating the channel $\mu$s through a two-step process---estimating the set of best-$M$ channels and then selecting the channel with $k^{th}$ highest value of $\mu$ from the set, where the index $k$ is determined by the user $m$ and current slot $t$. This is formally described in Algorithm~\ref{algo:DLF}.  

\begin{algorithm}[t]
	\DontPrintSemicolon
    \textbf{Initialize:} Set $\hat{\mu}_n^m=0$ to be the estimated success probability of Channel $n$, $T_n^m(1)=0\;n\in[N]$.\;
	\While{$t\geq 1$}{
	$\alpha^m_{n}(t)=\hat{\mu}_n^m T_n^m(t-1)+1$,\;
	$\beta^m_{n}(t)=(1-\hat{\mu}_n^m)T_n^m(t-1)+1$,\;
	For each $n\in[N]$, pick a sample $\hat{\theta}^m_n(t)$ of distribution,
	\begin{equation}
	    \hat{\theta}^m_n(t)\sim \text{Beta}(\alpha^m_{n}(t),\beta^m_{n}(t))
	    \label{eq:TS1}
	\end{equation}

	Set index $k = ((m + t) \bmod M) + 1$\;
	Schedule an update to a link $n(t)$ such that it is the arm with the $k^{th}$ largest value in \eqref{eq:TS1}.\;
	\uIf{Channel Acquired}{
	Receive reward $X_{n(t)}^m(t)\sim \text{Ber}(\mu_{n(t)}^m)$\;
    
	$\hat{\mu}_{n(t)}^m=\frac{\hat{\mu}^m_{n(t)}\cdot T_{n(t)}^m(t-1)+X_{n(t)}^m(t)}{T_{n(t)}(t-1)+1}$\;
	$T_{n(t)}^m(t)=T_{n(t)}^m(t-1)+1$\;}
	\Else{
	$\hat{\mu}_{n(t)}^m= \hat{\mu}_{n(t)}^m$ \text{(no update)}\;
	$T_{n(t)}^m(t)=T_{n(t)}^m(t-1)$\;
	}
	$t=t+1$}
	    
	\caption{{ Distributed Learning-based Thompson Sampling for $M$ Sources and $N$ Channels at Source $m$ }(DL-TS)}
	\label{algo:TS}
\end{algorithm}
\begin{algorithm}[t]
	\DontPrintSemicolon 
    \textbf{Initialize:} Set $\hat{\mu}_n^m=0$ to be the estimated success probability of Channel $n$, $T_n^m(1)=0\;n\in[N]$.\;
	\While{$t\geq 1$}{
	Let $E(t) \sim \text{Ber}\left(\min\left\{1,mn\frac{\log t}{t}\right\}\right)$.\;
	\uIf{$E(t) = 1$}{\textit{DLF:} Schedule an update on a channel chosen by the DLF policy given in Algorithm~\ref{algo:DLF}.}
	\Else{\textit{Thompson Sampling:} Schedule an update on a channel chosen by the DL-TS policy given in Algorithm~\ref{algo:TS}.\; 
	}
	\uIf{Channel Acquired}{
	Receive reward $X_{n(t)}^m(t)\sim \text{Ber}(\mu_{n(t)}^m)$\;
    
	$\hat{\mu}_{n(t)}^m=\frac{\hat{\mu}^m_{n(t)}\cdot T_{n(t)}^m(t-1)+X_{n(t)}^m(t)}{T_{n(t)}(t-1)+1}$\;
	$T_{n(t)}^m(t)=T_{n(t)}^m(t-1)+1$\;}
	\Else{
	$\hat{\mu}_{n(t)}^m= \hat{\mu}_{n(t)}^m$ \text{(no update)}\;
	$T_{n(t)}^m(t)=T_{n(t)}^m(t-1)$\;
	}
	$t=t+1$
	
	}
	    
	\caption{{ Distributed Learning-based Hybrid policy for $M$ Sources and $N$ Channels at Source $m$ }(DLH)}
	\label{algo:hyb}
\end{algorithm}

\begin{theorem}[Performance of DLF]
Consider any problem instance $\instance$, such that $\mu_{\text{min}} = \min\limits_{i=1:N}\mu_i > 0$, $\Delta = \min\limits_{ i,j \in [M]; i > j} \mu_i - \mu_{j}$ , and a constant $c$.
Then, for any sufficiently large T $>$ N, under Assumption~\ref{assumption:initialConditions},
\begin{align}
    \mathcal{R}_{\text{DLF}}(T){\leq}
	\frac{M^2}{\mu_{\text{min}}}{+}\frac{M^2 c \log T}{\mu_{\text{min}}} {\bigg[}1{+}(N -1)\hspace{-0.3em}{\left(\frac{8 \log T}{\Delta^{2}}{+}1{+}\frac{2\pi^{2}}{3}\right)}{\bigg]}. \nn
\end{align}
\label{thm:DLF}
\end{theorem}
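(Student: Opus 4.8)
The plan is to split the regret of each source into two essentially orthogonal pieces: \emph{how many} slots source $m$ fails to play the channel that the round-robin oracle would assign it (because of a mis-estimated best-$M$ set, a wrong within-set ranking, or a collision), and \emph{how costly} each such deviation is in terms of accumulated age. The first piece is a classical bandit counting problem; the second is the genuinely AoI-specific ingredient. I would begin by exploiting the reset property of AoI: whenever DLF transmits successfully for source $m$ in slot $t-1$ we have $a^{\text{DLF}}_m(t)=1\le a^*_m(t)$, so the per-slot regret $\E[a^{\text{DLF}}_m(t)-a^*_m(t)]$ is non-positive immediately after any DLF success. Hence positive regret can only accrue inside a maximal run of consecutive DLF failures that was triggered by a deviation from the oracle schedule. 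Writing the cumulative age of a single source as a sum over such failure runs shows that age grows quadratically in a run's length, which is precisely why AoI regret should carry an extra logarithmic factor relative to standard regret; I would therefore charge the age cost of one deviation to the length of the failure run containing it.

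Next I would bound the length of any failure run to obtain the AoI-specific multiplier $\tfrac{M c\log T}{\mu_{\text{min}}}$. Under the collision model a fixed source wins a contested channel with probability at least $1/M$ and then transmits successfully with probability at least $\mu_{\text{min}}$, so in every slot its update succeeds with probability at least $\mu_{\text{min}}/M$, irrespective of the other sources' actions. A failure run is thus stochastically dominated by a geometric random variable of mean $M/\mu_{\text{min}}$, and a union bound over the $T$ slots shows that with high probability no run exceeds length $\OO\!\big(\tfrac{M\log T}{\mu_{\text{min}}}\big)$. This is what turns ``extra age per deviation'' into the factor $\tfrac{M c\log T}{\mu_{\text{min}}}$, while the DLF initialization phase together with the residual low-probability long-run event contribute only a $T$-independent quantity, captured by the standalone term $\tfrac{M^2}{\mu_{\text{min}}}$.

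Then I would count the deviations. For each source and each round-robin index, selecting a channel outside the oracle's choice forces some confidence interval to fail, exactly as in the UCB-1 analysis: since DLF uses an upper-confidence bound to form the top-$k$ set $\mathcal{O}_k$ and a lower-confidence bound to pick within it, each of the $N-1$ competing arms contributes at most $\tfrac{8\log T}{\Delta^{2}}+1+\tfrac{2\pi^{2}}{3}$ expected erroneous slots, the two confidence bounds doubling the usual $\pi^{2}/3$ tail term, whereas collisions account for the leading $1$ inside the bracket. Multiplying this expected deviation count by the per-deviation age cost from the previous step, and finally summing over the $M$ sources, introduces the factor $M^{2}$ and assembles the claimed bound.

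The main obstacle I anticipate is the second step: rigorously charging the quadratically-accumulating age to individual deviations. The run length and the identity of the deviating arm are not independent—longer runs are both more expensive and more likely to overlap exploration—so one must bound the per-deviation cost by a worst-case run length that holds \emph{uniformly} (the high-probability envelope $\OO(\tfrac{M\log T}{\mu_{\text{min}}})$) rather than merely in expectation, and then verify that the complementary low-probability event is absorbed into the $T$-independent constant. Cleanly decoupling the UCB deviation count from this AoI run-length bound is the crux that separates the argument from a standard regret proof.
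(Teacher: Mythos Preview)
Your high-level decomposition—count the slots in which DLF deviates from the round-robin oracle via the UCB machinery, convert each deviation into an AoI cost, then sum over the $M$ sources—is exactly the skeleton of the paper's proof, and your deviation count is the same bound the paper imports from~\cite{OpportunisticSpectrum} (Lemma~\ref{lemma:bound-results}). The paper also notes, when summing over sources, that every collision for source $m$ is already some other source's sub-optimal pull, so only the $N_1$ terms survive and the bracket contains $(N-1)$ rather than $(N+M-2)$.

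Where you diverge is in the mechanism for the per-deviation AoI cost. You bound the maximal failure-run length with high probability by $L^\star=\OO(M\log T/\mu_{\text{min}})$ and propose to charge each deviation the length of its run. The paper never looks at failure runs; instead it fixes a \emph{deterministic} window of length $L=Mc\log T$ (with $c$ chosen so that $\prod_{i\le M}(1-\mu_i)^{c\log T}\le 1/T$), defines $E_t=\{k_m(\tau)=k_m^*(\tau)\text{ for all }\tau\in(t-L,t]\}$, and uses the product formula for $\PP(a_m(t)>\tau)$ to get $\E[a_m^{\mathcal P}(t)\mid E_t]\le \E[a_m^*(t)]+M/(\mu_{\text{min}}T)$ together with the trivial $\E[a_m^{\mathcal P}(t)\mid E_t^c]\le M/\mu_{\text{min}}$. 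The indicator bound $\mathbbm{1}_{E_t^c}\le\sum_{\tau=t-L+1}^{t}\mathbbm{1}_{\{k_m(\tau)\ne k_m^*(\tau)\}}$ then gives $\sum_t\mathbbm{1}_{E_t^c}\le L\cdot N_1(\mathbf K_m(T))$ directly (this is Lemma~\ref{lemma:expectedAgeBound}). This sidesteps exactly the obstacle you flag: because $L$ is fixed and $M/\mu_{\text{min}}$ is the unconditional expected-AoI ceiling, the two factors are decoupled by construction, no high-probability envelope is needed, and the dependence between run length and deviation identity never arises. Your route can be made rigorous, but ``charging one deviation the length of its run'' understates the cost when a single deviation sits inside a long run (the cumulative age there is quadratic in the run length), so a naive execution yields $(L^\star)^2$ per deviation and loses a $\log T$ factor; recovering the linear charge requires an additional coupling argument that the paper's conditioning trick avoids entirely.
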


From Theorem~\ref{thm:DLF}, we conclude that DLF scales as $\OO(M^2 N \log^2 T)$. The proof first characterizes the source wise regret as a function of the expected number of times a non-desired channel is scheduled or another source acquires the desired channel under the policy. The result then follows using known upper-bounds for this quantity, from \cite{OpportunisticSpectrum} for DLF. We elaborate on this in Section~\ref{sec:proof} through Lemmas~\ref{lemma:expectedAgeBound} and \ref{lemma:bound-results}.

\section{Our Policies}
\label{sec:mainResults}
In this section, we extend Thompson Sampling~\cite{thompson1933likelihood} to our setting. We also propose a new hybrid policy, as well as AoI-Aware variants for each of these policies, and compare their performances via simulations.

\subsection{AoI-Agnostic Policies}
\begin{definition}[AoI-Agnostic Policies]
    A policy is AoI-agnostic if, given past scheduling decisions and the number of successful updates sent via each of the $N$ channels in the past, it does not explicitly use the age of information of any source in a time-slot to make scheduling decisions.
\end{definition}

While age-of-information is a new metric, to devise AoI-agnostic policies one can use the myriad of policies used for Bernoulli rewards, most commonly the Upper Confidence Bound (UCB)~\cite{auer2002finite} and Thompson Sampling~\cite{thompson1933likelihood}. These policies can be readily applied to a one-source setting, i.e. $M=1$~\cite{bhandari2020age}. DLF \cite{OpportunisticSpectrum}, as described in Algorithm~\ref{algo:DLF}, is based on UCB and tries to mimic our Oracle policy while avoiding collisions. Similarly, we extend Thompson Sampling for our setting, and we term this new policy as \emph{Distributed Learning-based Thompson Sampling (DL-TS)}, given in Algorithm~\ref{algo:TS}. Further, we combine these two policies to propose a hybrid policy - \emph{Distributed Learning-based Hybrid policy (DLH)}, detailed in Algorithm~\ref{algo:hyb}.

\begin{algorithm}[t]
	\DontPrintSemicolon 
    \textbf{Initialize:} Set $\hat{\mu}_n^m=0$ to be the estimated success probability of Channel $n$, $T_n^m(1)=0\;n\in[N]$.\;
	\While{$t\geq 1$}{
	$\alpha^m_{n}(t)=\hat{\mu}_n^m T_n^m(t-1)+1$,\;
	$\beta^m_{n}(t)=(1-\hat{\mu}_n^m)T_n^m(t-1)+1$,\;
	Set index $k = ((m + t) \bmod M) + 1$\;
	Let $\text{\textit{limit(t)}}= k^{th} \min\limits_{n\in [N]}\frac{\alpha_{n}^m(t)+\beta_{n}^m(t)}{\alpha_{n}^m(t)}$\;
	\uIf {$a(t-1)>\text{limit(t)}$}{
		\textit{Exploit:} Select channel with the $k^{th}$ highest estimated success probability
	}
	\Else{
		\textit{Explore:}\;
    	For each $n\in[N]$, pick a sample $\hat{\theta}^m_n(t)$ of distribution,
    	\begin{equation}
    	    \hat{\theta}^m_n(t)\sim \text{Beta}(\alpha^m_{n}(t),\beta^m_{n}(t))
    	    \label{eq:TSaa1}
    	\end{equation}
    	
    	Schedule an update to a link $n(t)$ such that it is the arm with the $k^{th}$ largest value in \eqref{eq:TSaa1}.\;
    	}
	\uIf{Channel Acquired}{
	Receive reward $X_{n(t)}^m(t)\sim \text{Ber}(\mu_{n(t)}^m)$\;
    
	$\hat{\mu}_{n(t)}^m=\frac{\hat{\mu}^m_{n(t)}\cdot T_{n(t)}^m(t-1)+X_{n(t)}^m(t)}{T_{n(t)}(t-1)+1}$\;
	$T_{n(t)}^m(t)=T_{n(t)}^m(t-1)+1$\;}
	\Else{
	$\hat{\mu}_{n(t)}^m= \hat{\mu}_{n(t)}^m$ \text{(no update)}\;
	$T_{n(t)}^m(t)=T_{n(t)}^m(t-1)$\;
	}
	$t=t+1$}
	    
	\caption{{ AoI-Aware Distributed Learning-based Thompson Sampling for $M$ Sources and $N$ Channels at Source $m$ }(DL-TS-AA)}
	\label{algo:TS-AA}
\end{algorithm}
\begin{algorithm}[t]
	\DontPrintSemicolon 
    \textbf{Initialize:} Set $\hat{\mu}_n^m=0$ to be the estimated success probability of Channel $n$, $T_n^m(1)=0\;n\in[N]$.\;
	\While{$1\leq t\leq N$}{
	Schedule an update to a link $n(t)$ such that\;
	$n(t) = ((m + t) \bmod M) + 1$,\;
	Receive reward $X_{n(t)}^m(t)\sim \text{Ber}(\mu_{n(t)}^m)$\;
	$\hat{\mu}_{n(t)}^m = X_{n(t)}^m(t)$,\;
	$T_{n(t)}^m(t) = 1$\;
	$t = t + 1$\;}

	\While{$t\geq N+1$}{
	$\alpha^m_{n}(t)=\hat{\mu}_n^m T_n^m(t-1)+1$,\;
	$\beta^m_{n}(t)=(1-\hat{\mu}_n^m)T_n^m(t-1)+1$,\;
	Set index $k = ((m + t) \bmod M) + 1$\;
	Let $\text{\textit{limit(t)}}= k^{th} \min\limits_{n\in [N]}\frac{\alpha_{n}^m(t)+\beta_{n}^m(t)}{\alpha_{n}^m(t)}$\;
	\uIf {$a(t-1)>\text{limit(t)}$}{
		\textit{Exploit:} Select channel with the $k^{th}$ highest estimated success probability
	}
	\Else{
	
	Let the set $\mathcal{O}_k$ contain the $k$ arms with the $k$ largest values in \eqref{eq:slk-2}
	\begin{equation}
	    \hat{\mu}_n^m + \sqrt{\frac{2\log t}{T_n^m(t-1)}}
	    \label{eq:slk-2}
	\end{equation}
	Schedule an update to a link $n(t)$ such that $$ n(t)=\arg \min_{n\in \mathcal{O}_k}\hat{\mu}_n^m - \sqrt{\frac{2\log t}{T_n^m(t-1)}}$$
	}
	\uIf{Channel Acquired}{
	Receive reward $X_{n(t)}^m(t)\sim \text{Ber}(\mu_{n(t)}^m)$\;
    
	$\hat{\mu}_{n(t)}^m=\frac{\hat{\mu}^m_{n(t)}\cdot T_{n(t)}^m(t-1)+X_{n(t)}^m(t)}{T_{n(t)}(t-1)+1}$\;
	$T_{n(t)}^m(t)=T_{n(t)}^m(t-1)+1$\;}
	\Else{
	$\hat{\mu}_{n(t)}^m= \hat{\mu}_{n(t)}^m$ \text{(no update)}\;
	$T_{n(t)}^m(t)=T_{n(t)}^m(t-1)$\;
	}
	$t=t+1$}
	    
	\caption{{ AoI-Aware Distributed Learning Algorithm with Fairness for $M$ Sources and $N$ Channels at Source $m$ }(DLF-AA)}
	\label{algo:DLF-AA}
\end{algorithm}

Empirically, we observe that DLF exhibits a lesser number of collisions while being more exploratory, and DL-TS is more exploitative but leads to a higher number of collisions. The DLH policy shown in Algorithm~\ref{algo:hyb}, switches between DLF and DL-TS using a Bernoulli random variable to balance this trade-off. As the chance of a collision is higher in the initial time-slots, here the algorithm uses DLF with a higher probability than DL-TS. However, the probability of using DLF decreases with an increase in the number of time-slots elapsed. 


\subsection{AoI-Aware Policies}

In this section, we propose AoI-aware variants of the three policies discussed in the previous section. In the classical MAB with Bernoulli rewards, the contribution of a sub-optimal pull to the cumulative regret is upper bounded by one, but in AoI bandits it can be greater than one. Also, unlike the MAB, for AoI bandits, the difference between AoIs under a candidate policy and the {\color{black}Oracle} policy in a time-slot can be unbounded. This motivates the need to take the current AoI value into account when making scheduling decisions.

Intuitively, it makes sense to explore when AoI is low and exploit when AoI is high since the cost of making a mistake is much higher when AoI is high. We use this intuition to design AoI-aware policies. The key idea behind these policies is that they mimic the original policies when AoI is below a threshold and exploit when AoI is equal to or above a threshold, for an appropriately chosen instance-dependent dynamic threshold. This threshold must be chosen carefully, as if the threshold is too small, we will exploit prematurely and thereby, incurring high regret. However, if the threshold is too high, only the original policy will be invoked, resulting in the same regret.

In each policy, at a source $m$, we maintain an estimate of success probability of the arms, denoted by $\boldsymbol{\hat\mu}^m$. For an index $k$ at time $t$, we mimic the original policy if the AoI is not more than $\frac{1}{\hat\mu^m_k}$ (which is the AoI value if the $k^{th}$ arm was used throughout). Otherwise, we exploit the ``$k^{th}$ best" arm (based on past observations) as per the source. Emperically, we observed that a dynamic threshold dependent on the estimate $\boldsymbol{\hat\mu}^m$ outperformed a constant (hard) threshold The AoI-Aware modifications to DLF, DL-TS and DLH are given in Algorithms~\ref{algo:DLF-AA}, \ref{algo:TS-AA} and \ref{algo:hyb-AA} respectively.

\begin{algorithm}[t]
	\DontPrintSemicolon 
    \textbf{Initialize:} Set $\hat{\mu}_n^m=0$ to be the estimated success probability of Channel $n$, $T_n^m(1)=0\;n\in[N]$.\;
	\While{$t\geq 1$}{
	Let $E(t) \sim \text{Ber}\left(\min\left\{1,mn\frac{\log t}{t}\right\}\right)$.\;
	\uIf{$E(t) = 1$}{\textit{DLF:} Schedule an update on a channel chosen by the DLF-AA policy given in Algorithm~\ref{algo:DLF-AA}.}
	\Else{\textit{Thompson Sampling:} Schedule an update on a channel chosen by the DL-TS-AA policy given in Algorithm~\ref{algo:TS-AA}.\; 
	}
	\uIf{Channel Acquired}{
	Receive reward $X_{n(t)}^m(t)\sim \text{Ber}(\mu_{n(t)}^m)$\;
    
	$\hat{\mu}_{n(t)}^m=\frac{\hat{\mu}^m_{n(t)}\cdot T_{n(t)}^m(t-1)+X_{n(t)}^m(t)}{T_{n(t)}(t-1)+1}$\;
	$T_{n(t)}^m(t)=T_{n(t)}^m(t-1)+1$\;}
	\Else{
	$\hat{\mu}_{n(t)}^m= \hat{\mu}_{n(t)}^m$ \text{(no update)}\;
	$T_{n(t)}^m(t)=T_{n(t)}^m(t-1)$\;
	}
	$t=t+1$
	
	}
	    
	\caption{{ AoI-Aware Distributed Learning-based Hybrid Policy for $M$ Sources and $N$ Channels at Source $m$ }(DLH-AA)}
	\label{algo:hyb-AA}
\end{algorithm}

\section{Simulations}
In this section, we will present the simulation results for all the six policies discussed in section~\ref{sec:mainResults}. All the simulations hereafter are conducted for $T=2\times10^4$ time-slots with each data point averaged across 200 iterations. For the purpose of simulating the policies, we choose $\boldsymbol{\mu}$ such that consecutive elements are equidistant, and the difference is denoted by $\Delta$. 

\begin{figure}[t!]
\centering
\begin{subfigure}[t]{\columnwidth}
\includegraphics[trim=3 5 5 3,clip,scale=0.4]{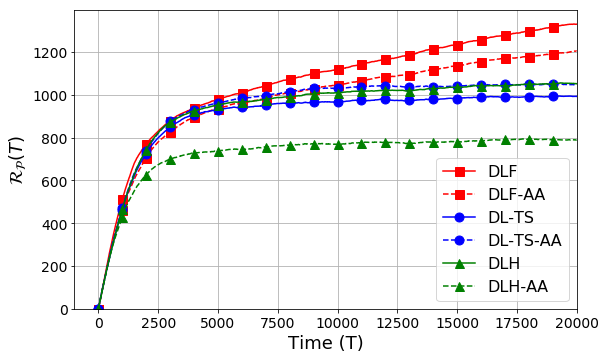}
\caption{$M=2,N=4,\boldsymbol{\mu}=[0.8,0.75,0.7,0.65]$}
\label{fig:regret-m-2}
\end{subfigure}
\\
\begin{subfigure}[t]{\columnwidth}
\includegraphics[trim=3 5 5 3,clip,scale=0.4]{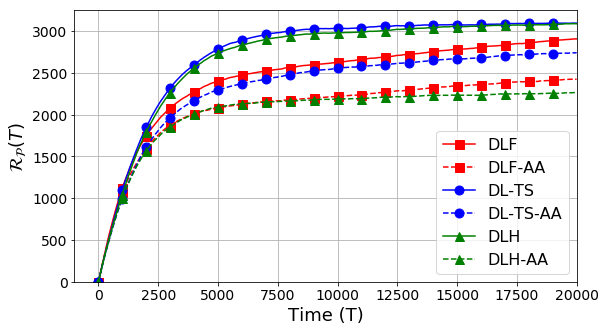}
\caption{$M=3,N=5,\boldsymbol{\mu}=[0.8,0.75,0.7,0.65, 0.6]$}
\label{fig:regret-m-3}
\end{subfigure}
\\
\begin{subfigure}[t]{\columnwidth}
\includegraphics[trim=3 5 5 3,clip,scale=0.4]{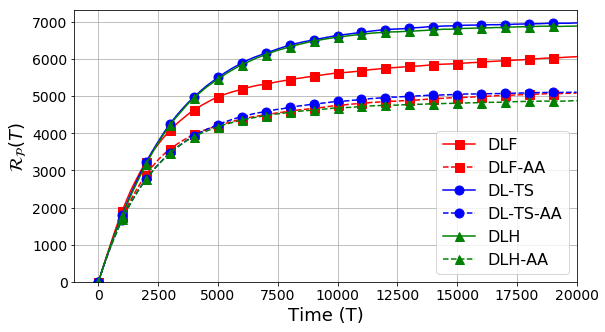}
\caption{$M=4, N=7,\boldsymbol{\mu}=[0.8,0.75,0.7,0.65,0.6,0.55,0.5]$}
\label{fig:regret-m-4}
\end{subfigure}
\caption{Regret plots for all six policies for $T=2\times10^4$ averaged over 200 iterations}
\label{fig:regrets}
\end{figure}

Figure~\ref{fig:regrets} shows the regret, $\mathcal{R}_{\calP}(T)$, for all the six policies mentioned previously for different problem instances $\instance$. We notice \emph{AoI-Aware} policies in all cases (except for DL-TS and DL-TS-AA in Figure~\ref{fig:regret-m-2}) have a smaller regret (and fewer collisions) than their \emph{AoI-Agnostic} counterparts. 


In Table~\ref{table:arms}, we compare the distribution of source-wise channels scheduled under two policies: DLF-AA and DLTS-AA for problem instance $(M=2,N=4,\boldsymbol{\mu}=[0.8,0.75,0.7,0.65])$. We observe that the number of pulls of sub-optimal arms, that is, arms $\notin \mathcal{S}(2)$ is higher in DLF-AA as compared to DLTS-AA. At the same time, the former policy averages about 414 collisions overall, whereas the latter averages about 556 collisions overall. Similarly in Table~\ref{table:arms-2}, we show the channel distribution for the problem instance $(M=2,N=4,\boldsymbol{\mu}=[0.8,0.75,0.7,0.65, 0.6])$. The average number of collisions for DLF-AA is 1071 and the same for DL-TS-AA is 1478.

\begin{table}[t]
	\begin{center}
	\caption{Distribution of Channels scheduled across sources for various policies when $M=2, N=4 \text{ and } \boldsymbol{\mu} = [0.8;0.65]$}
		\addtolength{\tabcolsep}{-1.5pt}
		\begin{tabular}{c|ccccc} 
			\hline
			\multicolumn{2}{c}{}&$\boldsymbol{n=1}$ & $\boldsymbol{n=2}$  & $\boldsymbol{n=3}$ & $\boldsymbol{n=4}$  \\
			\hline
			\hline
			\multirow{2}{*}{\textbf{DLF-AA}}& $\boldsymbol{m=1}$  & 9825 & 7429 & 1914 &  832 \\  
			& $\boldsymbol{m=2}$  & 9823 & 7421 & 1917 & 839 \\
            \hline
            \hline
            \multirow{2}{*}{\textbf{DLTS-AA}} &$\boldsymbol{m = 1}$ & 9871  & 9308 & 672 & 149 \\
			&$\boldsymbol{m=2}$ & 9879 & 9411 &  554 &  156 \\
			\hline
		\end{tabular}
		
		\label{table:arms}
	\end{center}
\end{table}
\begin{table}[t!]
	\begin{center}
	\caption{Distribution of Channels scheduled across sources for various policies when $M=3, N=5 \text{ and } \boldsymbol{\mu} = [0.8;0.6]$}
		\addtolength{\tabcolsep}{-1.5pt}
		\begin{tabular}{c|ccccccc} 
			\hline
			\multicolumn{2}{c}{}&$\boldsymbol{n=1}$ & $\boldsymbol{n=2}$  & $\boldsymbol{n=3}$ & $\boldsymbol{n=4}$ & $\boldsymbol{n=5}$ \\
			\hline
			\hline
			\multirow{2}{*}{\textbf{DLF-AA}}& $\boldsymbol{m=1}$ & 6621 & 6543 & 4598 & 1511 & 727 \\  
			& $\boldsymbol{m=2}$ & 6627 & 6524 & 4631 & 1487 &   731 \\
			& $\boldsymbol{m=3}$ & 6624 & 6535 & 4634 & 1479 & 728 \\
            \hline
            \hline
            \multirow{2}{*}{\textbf{DLTS-AA}} &$\boldsymbol{m=1}$ & 6640 &  6524 &  6023 & 655  & 158 \\
			&$\boldsymbol{m=2}$ & 6585 & 6557 & 6039 &  644 & 175 \\
			&$\boldsymbol{m=3}$& 6581 & 6573 & 6096 &  580  & 170 \\
			\hline
		\end{tabular}
		
		\label{table:arms-2}
	\end{center}
\end{table}

The same trend occurs when comparing DLF with DL-TS. This affirms that a trade-off exists between the two causes of regret: sub-optimal arm pulls and the number of collisions. This served as the primary motivation behind the hybrid policies DLH and DLH-AA, which lead to an intermediate number of both sub-optimal arm pulls and number of collisions. As one can see from Figure~\ref{fig:regrets}, DLH-AA performs the best in terms of cumulative regret for different values of M and N, especially when $\Delta$ is small. For higher $\Delta$, it was observed that the DL-TS policies performed better, since $\boldsymbol{\mu}$s being farther apart will lead to lesser collisions. DLH always resulted in intermediate values of regret, in between regret values for DLF and DL-TS. However, DLH is `closer' to DL-TS as compared to DLF, which is justified by the construction of the policy.

In both Tables~\ref{table:arms} and \ref{table:arms-2}, there is a sizable difference in the pulls of the $M^{th}$ arm as compared to the others in the set $\mathcal{S}(M)$. We believe that this is because across polices a majority of the sub-optimal pulls arise from scheduling the $M^{th}$ channel. This trend can be seen in Figure~3(b) of \cite{OpportunisticSpectrum}. Note that although the order of magnitude of the difference in the arm pulls is similar to ours, the absolute number is higher as their simulations are conducted for $T=10^6$ which is much larger than what we used. On increasing $T$, we notice that the number of arm pulls is close to the numbers presented in \cite{OpportunisticSpectrum}. 

\begin{figure}[t!]
\centering
\begin{subfigure}[t]{\columnwidth}
\includegraphics[trim=3 5 5 3,clip,scale=0.4]{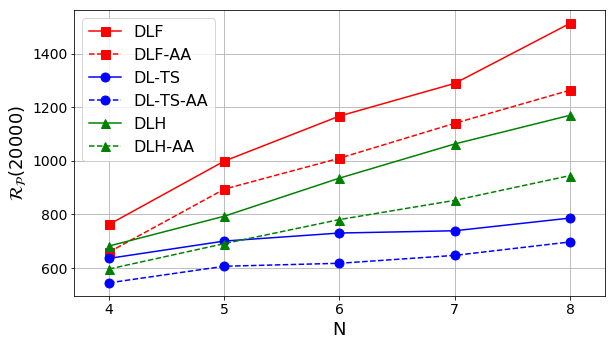}
\caption{$\mathcal{R}_{\mathcal{P}}(2\times10^4)$ vs. $N$; $M=3$, $\mu_1 = 0.9 \text{ and } \Delta =0.1$.}
\label{fig:var-n}
\end{subfigure}
\\
\begin{subfigure}[t]{\columnwidth}
\includegraphics[trim=3 5 5 3,clip,scale=0.4]{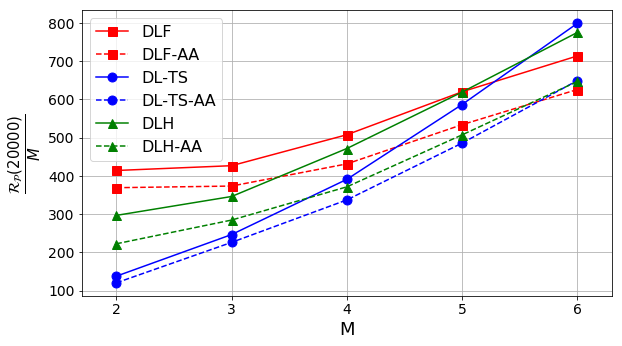}
\caption{ $\mathcal{R}_{\mathcal{P}}(2\times10^4)/M$ vs.  $M$; $N=7$ and $\boldsymbol{\mu}=[0.9;0.3]$}
\label{fig:var-m}
\end{subfigure}
\\
\begin{subfigure}[t]{\columnwidth}
\includegraphics[trim=3 5 5 3,clip,scale=0.4]{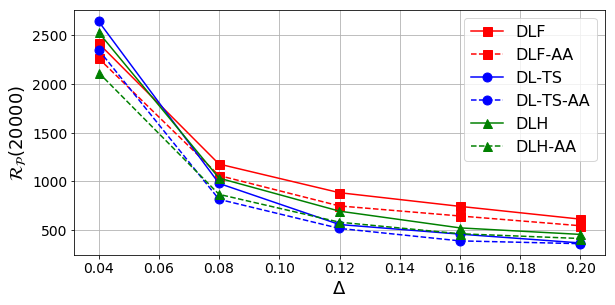}
\caption{$\mathcal{R}_{\mathcal{P}}(2\times10^4)$ vs. $\Delta$; $M=3, N=5 \text{ and } \mu_1=0.9$}
\label{fig:var-delta}
\end{subfigure}
\caption{Variation in regret, $\mathcal{R}_{\calP}(T)$, at $T=2\times10^4$, with different values of $N, M \text{ and } \Delta \text{ (or } \boldsymbol{\mu}$), in that order.}
\label{fig:variation}
\end{figure}

Figure~\ref{fig:variation} shows the variation of regret values of policies at $T=2\times10^4$ with different parameters. In Figure~\ref{fig:var-n}, we observe that the regret scales approximately linearly with $N$, with the DLF policies having the highest values of both slope and regret, and DL-TS the lowest. This is in accordance with the degree of exploration in the policies. For variation with $M$, in Figure~\ref{fig:var-m}, we plot $\mathcal{R}_{\calP}(T)/M$ instead of regret values, and since these plots are almost linear, suggesting that regret scales with $M^2$. Notice that the order of the magnitude of the slopes is reversed as compared to the previous plot---this is because as $M$ increases, DL-TS becomes more susceptible to collisions, thus accumulating higher regret. Figure~\ref{fig:var-delta} indicates an inverse relationship of regret $\mathcal{R}_\calP(T)$ with $\Delta$ in all the six policies. We note that, when $\Delta=0.04$, DLH and DLF based policies do better than the DL-TS based ones, DLH-AA being the best policy. This trend changes to DL-TS based policies having the lowest regret and the DLF based policies have the highest, with increasing $\Delta$.

\section{Proofs}
\label{sec:proof}
\subsection{Proof of Oracle}
\label{ssec:Oracle}
An optimal policy minimizes the expected value of the total age-of-information across all sources at any large and finite time-slot $T$. We first establish that any candidate for an optimal policy must satisfy two properties given in Lemmas~\ref{thm:no-col} and \ref{thm:best-M}. Then, in Lemma~\ref{lemma:effective-aoi}, we characterize the AoI for any I.I.D policy. Finally, using these lemmas and the KKT conditions, we can prove Theorem~\ref{thm:iid}.
\begin{lemma}[Collision Sub-Optimality]
\label{thm:no-col}
For a problem instance $\instance$, where $\mu_{\text{min}} = \min\limits_{i=1:N}\mu_i > 0$, any policy that leads to a collision in any time slot is sub-optimal.
\end{lemma}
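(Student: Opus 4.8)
The plan is to prove the contrapositive by an exchange (coupling) argument: starting from any policy $\mathcal{P}$ that produces a collision in some slot $t$, I construct a policy $\mathcal{P}'$ that agrees with $\mathcal{P}$ in every other slot and for every non-colliding source, but resolves the collision, and I show that $\mathcal{P}'$ attains strictly smaller expected cumulative AoI. This makes $\mathcal{P}$ non-optimal, i.e.\ sub-optimal. Suppose under $\mathcal{P}$ a set $S$ of $c = |S| \geq 2$ sources all select channel $i$ at slot $t$. First I would record the counting fact that a collision frees channels: the $M$ sources then occupy at most $M - c + 1$ distinct channels, so since $N \geq M$ at least $N - (M - c + 1) \geq c - 1$ channels are idle. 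Hence channel $i$ together with $c-1$ idle channels furnishes $c$ distinct channels, and I let $\mathcal{P}'$ place the $c$ sources of $S$ on these $c$ distinct channels. Thus $\mathcal{P}'$ creates no new collision and differs from $\mathcal{P}$ only in the slot-$t$ decisions of $S$.

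Next I analyze the one-slot effect. Writing $a_m := a_m^{\mathcal{P}}(t)$ for the pre-update AoI (common to both policies), under $\mathcal{P}$ exactly one source of $S$, chosen uniformly at random, may transmit on channel $i$, so source $m$ resets (AoI $\to 1$) with probability $\mu_i/c$ and otherwise ages to $a_m+1$; under $\mathcal{P}'$ source $m$ sits alone on a channel of success probability $p_m \geq \mu_{\text{min}} > 0$ and resets with probability $p_m$. Hence
\begin{align*}
\E\Big[\sum_{m \in S}\big(a_m^{\mathcal{P}}(t+1) - a_m^{\mathcal{P}'}(t+1)\big)\Big] = \sum_{m\in S}\Big(p_m - \tfrac{\mu_i}{c}\Big)a_m .
\end{align*}
Because $\sum_{m\in S}p_m = \mu_i + \sum_{\text{idle}}\mu_{j} > \mu_i$, the average of $\sum_{m} p_{\sigma(m)} a_m$ over assignments $\sigma$ strictly exceeds $\tfrac{\mu_i}{c}\sum_{m} a_m$, so $\mathcal{P}'$ may choose the rearrangement-optimal assignment (best idle channels to the highest-AoI sources) and render this slot-$(t{+}1)$ reduction strictly positive.

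To carry this single-slot gain to the whole horizon I would couple the two systems after slot $t$: hold the post-$t$ schedule fixed and drive all channel outcomes and collision tie-breaks by common randomness, so that each source $m$ first succeeds after $t$ at the same random slot $\tau_m$ under both policies. Then for $t < s \le \tau_m$ the two AoI paths of $m$ stay a constant $\delta_m := a_m^{\mathcal{P}}(t+1) - a_m^{\mathcal{P}'}(t+1)$ apart, and for $s > \tau_m$ they coincide; summing over time yields a per-source cumulative difference $\delta_m\,L_m$ with $L_m := (\tau_m \wedge T) - t \geq 1$. Since $\delta_m$ is measurable with respect to the slot-$t$ randomness while $L_m$ depends only on the independent future, the expected total improvement factorizes as $\sum_{m\in S}\E[\delta_m]\,\E[L_m] = \sum_{m\in S}(p_m - \tfrac{\mu_i}{c})a_m\,\E[L_m]$, which the same rearrangement argument, now applied against the positive weights $a_m\E[L_m]$, makes strictly positive. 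This establishes that $\mathcal{P}'$ is strictly better, hence that $\mathcal{P}$ is sub-optimal.

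The main obstacle is exactly this propagation step. The reassignment can flip the identity of the source that succeeds in slot $t$: a source that would have won the strong channel $i$ under $\mathcal{P}$ may be pushed onto a weak idle channel under $\mathcal{P}'$, so when $\mu_{\text{min}} < \mu_i/c$ some $\delta_m$ are negative and \emph{no} per-source sample-path domination is available. The argument must therefore remain aggregate, leaning on the freedom to permute the assignment (the rearrangement inequality) together with the guarantee from $N \geq M$ and $\mu_{\text{min}} > 0$ that enough strictly-positive idle channels exist. A secondary point to pin down is that the coupling needs the post-$t$ schedule to be independent of the slot-$t$ outcome, which is the natural regime for characterizing an optimal oblivious schedule and is what lets $\delta_m$ and $L_m$ separate in expectation.
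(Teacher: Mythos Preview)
Your argument is sound and in fact more careful than the paper's own proof, but it takes a harder route. The paper keeps \emph{one} of the $l$ colliders on the contested channel $j$ and reroutes only the remaining $l-1$ to idle channels, then asserts that ``for these sources, under policy $\mathcal{P}$, the AoI would always increase by $1$.'' That line is literally true only if the source retained on $j$ is identified with the random collision winner---an implicit coupling the paper never spells out. Under that coupling the $l-1$ rerouted sources are exactly the losers, whose AoI increments with probability $1$ under $\mathcal{P}$ but only with probability at most $1-\mu_{\min}<1$ under $\mathcal{P}'$, while the stayer's slot-$t$ outcome is identical in both policies. This yields \emph{per-source} sample-path domination at $t{+}1$, which then propagates monotonically under any shared post-$t$ schedule with no independence hypothesis needed. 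You instead reassign all $c$ sources and correctly observe that per-source domination can fail when $\mu_{\min}<\mu_i/c$; this forces the aggregate rearrangement step and the factorization $\E[\delta_m L_m]=\E[\delta_m]\,\E[L_m]$, which in turn requires the post-$t$ schedule to be outcome-independent---a restriction you flag but do not remove. Your path is valid in the oblivious-oracle regime the lemma is actually used for, but you could collapse most of the machinery (and drop the obliviousness caveat) by adopting the paper's tacit trick: let $\mathcal{P}'$ pick its stayer uniformly at random and couple that draw to the CSMA winner under $\mathcal{P}$, recovering per-source domination directly.
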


\begin{proof}[Proof of Lemma~\ref{thm:no-col}]
This can be proved by contradiction. \\
Assume, a policy $\mathcal{P}$, which is optimal and there is a collision at channel $j \in [N]$ in time-slot $t$. We know that in the problem instance $\instance$, $N \geq M$. For a collision to happen, at least two sources have been scheduled on the channel $j$. Let the number of sources involved in the collision be $l\text{, and } 2 \leq l \leq M$ \\
$\therefore \exists i_{1}, \cdots ,i_{l-1} \in [N] \text{ and } i_{1} \neq i_{2} \neq \dots \neq i_{l-1} \neq j$, such that channels $i_{1},\cdots,i_{l-1}$ are idle. \\
$\mu_{\text{min}} = \min\limits_{k=1:N}\mu_k > 0 \implies \mu_{i_1}, \cdots, \mu_{i_{l-1}} > 0$ \\
Construct an alternate policy $\mathcal{P}^{'}$, which in time slot $t$, is identical to policy $\mathcal{P}$ for non-colliding sources, and for all but one sources colliding at channel $j$, schedules them on channels $i_1,\cdots, i_{l-1}$ randomly without repetition. \\
For these sources, under policy $\mathcal{P}$, the AoI would always increase by 1 , but under policy $\mathcal{P^{'}}$, the AoI would increase by 1 $wp. \leq 1 - \mu_{\text{min}} < 1$. 
$$\therefore \E[a_m^{\mathcal{P}}] \geq \E[a_m^{\mathcal{P'}}] \implies \sum_{m=1}^{M}\E[a_m^{\mathcal{P}}] \geq \sum_{m=1}^{M}\E[a_m^{\mathcal{P'}}],$$ which is a contradiction. No such optimal policy $\mathcal{P}$ exists.
\end{proof}
\begin{lemma}[Best-$M$ Channels]
\label{thm:best-M}

Let $\mathcal{S}(k) = \{1,\cdots, k\}$ be the set of k channels with the highest success probabilities.Then, any policy, that is not sub-optimal as per Lemma~\ref{thm:no-col}, but selects an arm not in $\mathcal{S}(M)$, is sub-optimal.

\end{lemma}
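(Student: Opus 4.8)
The plan is to prove this by contradiction via an exchange argument, closely paralleling the proof of Lemma~\ref{thm:no-col}. Suppose for contradiction that there is an optimal policy $\mathcal{P}$ which is collision-free (as forced by Lemma~\ref{thm:no-col}) yet, in some time-slot $t$, schedules some source $m$ on a channel $j \notin \mathcal{S}(M)$, i.e. $j > M$. Since $\mathcal{P}$ is collision-free and there are $M$ sources, exactly $M$ distinct channels are used in slot $t$; as one of them lies outside $\mathcal{S}(M)$, at most $M-1$ of the $M$ channels in $\mathcal{S}(M) = \{1,\dots,M\}$ are occupied. By pigeonhole, at least one channel $i \in \mathcal{S}(M)$ is idle in slot $t$, and since $i \leq M < j$ and $\boldsymbol{\mu}$ is strictly decreasing, $\mu_i > \mu_j$. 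I would then construct an alternate policy $\mathcal{P}'$ that is identical to $\mathcal{P}$ everywhere except that in slot $t$ it schedules source $m$ on the idle channel $i$ instead of $j$; because $i$ was idle, $\mathcal{P}'$ introduces no new collision and remains collision-free, and no other source's schedule changes.

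The heart of the argument is to show that this single swap can only decrease source $m$'s age, and strictly so in expectation. I would couple the two systems by drawing, for each future transmission attempt of source $m$, one uniform random variable and declaring the attempt a success iff it falls below the relevant $\mu$. Since the two policies agree on every channel used by $m$ after slot $t$, all of those attempts can be coupled to have identical outcomes; the only difference is the attempt in slot $t$, where the monotone coupling together with $\mu_i > \mu_j$ gives $B^{\mathcal{P}'}_t \geq B^{\mathcal{P}}_t$ almost surely (here $B_t$ denotes the success indicator in slot $t$).

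Using the recursion $a(s+1) = \mathbbm{1}\{\text{success at } s\} + (a(s)+1)\,\mathbbm{1}\{\text{failure at } s\}$, a short induction on $s$ then shows $a^{\mathcal{P}'}_m(s) \leq a^{\mathcal{P}}_m(s)$ for all $s > t$ under the coupling: at $s=t+1$ a higher chance of success can only reset the age sooner, and once the inequality holds it is preserved, since from then on the two ages react identically to the shared outcome of every later slot. Moreover, on the event $\{B^{\mathcal{P}}_t = 0,\ B^{\mathcal{P}'}_t = 1\}$, which has probability $\mu_i - \mu_j > 0$, the inequality is strict at $s = t+1$. Hence $\mathbb{E}\big[\sum_s a^{\mathcal{P}'}_m(s)\big] < \mathbb{E}\big[\sum_s a^{\mathcal{P}}_m(s)\big]$, while all other sources are unaffected, so the total expected age strictly decreases, contradicting the optimality of $\mathcal{P}$.

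The step I expect to be the main obstacle is precisely the propagation of the one-slot gain across the remaining horizon: a better channel in a single slot directly lowers the expected age only in the immediately following slot, and one must rule out the possibility that the gain is undone later. The monotone coupling, combined with the monotonicity of the age recursion in the success indicator, is what makes the per-slot domination persist, and this is the part that deserves the most care in the full write-up; by comparison, the pigeonhole observation and the collision-freeness of $\mathcal{P}'$ are routine.
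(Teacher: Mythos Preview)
Your proposal is correct and follows essentially the same exchange/pigeonhole argument as the paper's proof: assume an optimal collision-free policy uses some $j>M$ at slot $t$, find an idle $i\in\mathcal{S}(M)$, and swap. The paper's write-up is terser---it only notes that under $\mathcal{P}'$ the AoI increases with probability $1-\mu_i\le 1-\mu_j$ and jumps directly to $\E[a_m^{\mathcal{P}}]\ge\E[a_m^{\mathcal{P}'}]$---so your monotone-coupling argument, which propagates the one-slot domination to all later slots via the age recursion, actually makes rigorous the step the paper leaves implicit.
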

\begin{proof}[Proof of Lemma~\ref{thm:best-M}]
This can be proved by contradiction. \\
Assume, a policy $\mathcal{P}$, which is optimal such that in time-slot $t$, a source $m$ is scheduled on channel $j \in [N] \text{ where } j > M$.\\
From Lemma~\ref{thm:no-col}, in any time-slot t, M channels must be in use. By construction, it follows that $ \exists i \in [M] \text{ and } i < j$, such that channel $i$ is idle. Further,\\
$$\mu_{\text{min}} = \min\limits_{k=1:N}\mu_k > 0 \implies \mu_i \geq  \mu_j.$$
Construct an alternate policy $\mathcal{P}^{'}$, which in time slot $t$, is identical to policy $\mathcal{P}$ for all sources but m, where it schedule source m on channel $i$. For this source(s), under policy $\mathcal{P}$, the AoI would increase by 1 $wp. \text{ } 1 - \mu_{j}$, but under policy $\mathcal{P^{'}}$, the AoI would increase by 1 $wp.\text{ } 1 - \mu_{i} \leq 1 - \mu_{j}$. \\
$$\therefore \E[a_m^{\mathcal{P}}] \geq \E[a_m^{\mathcal{P'}}] \implies \sum_{m=1}^{M}\E[a_m^{\mathcal{P}}] \geq \sum_{m=1}^{M}\E[a_m^{\mathcal{P'}}], $$ which is a contradiction. No such optimal policy $\mathcal{P}$ exists.
\end{proof}
\begin{lemma}[AoI for I.I.D Policy] For any iid. policy $\calP$, that schedules channels in an iid. manner across time, let the effective success probability for a source $m$ be $\tilde{\mu}_m$. $\tilde{\mu}_m = \sum_{j=1}^{N} \lambda_{m,j} \mu_j$, where, $\lambda_{m,j}$ is the probability that source $m$ is scheduled on the channel $j$. Then, the expected value of the AoI of a source $m$ in time slot $t$ under this policy  is
    $\E[a^{\calP}_m(t)] = \frac{1}{\tilde{\mu}_m}$.
\label{lemma:effective-aoi}
\end{lemma}
\begin{proof}[Proof of Lemma~\ref{lemma:effective-aoi}]
By definition,
	$$
	\PP(a_m^{\mathcal{P}}(t) > \tau) = \prod_{i=0}^{\tau} \left(1-\mu_{k_m(t-i)}\right).
	$$	
	Note that since $a_m^{\calP}(t) \geq 1$ for all $t$ and m,
	$$
	\E[a_m^{\mathcal{P}}(t)] = \sum_{\tau=0}^{\infty} \PP(a_m(t) > \tau).
	$$
It follows that,
	\begin{align}
	\label{eq:aoi-defn}
	\E[a_m^{\mathcal{P}}(t)]= \E[\E[a_m^{\mathcal{P}}(t)]] =& \E\left[\sum_{\tau=0}^{\infty} \PP(a_m^{\mathcal{P}}(t) > \tau)\right] \nn\\
	=&\E\left[ \sum_{\tau=0}^{\infty}\prod_{i=0}^{\tau} \left(1-\mu_{k_m(t-i)}\right)\right] \\
	=&\sum_{\tau=0}^{\infty}\prod_{i=0}^{\tau} \bigg(1-\E[\mu_{k_m(t-i)}]\bigg) \nn \\
	=&\sum_{\tau=0}^{\infty}\prod_{i=0}^{\tau} \bigg(1-\tilde{\mu}_{m}\bigg) \nn \\
	=&\frac{1}{\tilde{\mu}_{m}}\text{ (From Assumption~\ref{assumption:initialConditions})}. \nn
	\end{align}
\end{proof}
\begin{proof}[Proof of Theorem~\ref{thm:iid}]
From Lemma~\ref{lemma:effective-aoi}, the total AoI is 
\begin{align}
        \sum_{m=1}^{M} \E[a_m^{\mathcal{P}}(t)] &= \sum_{m=1}^{M}\frac{1}{\tilde{\mu}_{m}} \nn \\
        &= \sum_{m=1}^{M}\frac{1}{\sum_{j=1}^{N} \lambda_{m,j} \mu_j}.
\label{eq:minimize}
\end{align}
We need to minimize \eqref{eq:minimize}, under the constraints $\forall i\in[M]$, $\sum_{j=1}^N \lambda_{i,j}=1$ and $\lambda_{i,j}\geq0$. Further, from Lemmas~\ref{thm:no-col} and \ref{thm:best-M},
\begin{equation}\forall j \in [M], \sum_{i=1}^M \lambda_{i,j} = 1
\text{ and } \forall i \in [M], j > M, \lambda_{i,j} = 0. \label{eq:cond}\end{equation}
Let $\pmb{\lambda} = (\lambda_{1,1},\lambda_{1,2},\cdots, \lambda_{i,j},\cdots, \lambda_{M,M})$ denote the set of $M^2 \text{ non-zero } \lambda_{i,j}$, according to the constraints. Formalizing this constraint-minimization problem as per the KKT conditions we get,
$$ \text{Minimize } f(\pmb{\lambda}) = \sum_{m=1}^{M}\frac{1}{\sum_{j=1}^{N} \lambda_{m,j} \mu_j} $$
$$ g_{i,j}(\pmb{\lambda}) = - \lambda_{i,j} \leq 0 \text{ }\forall i,j \in [M]$$ 
$$ h^s_i(\pmb{\lambda}) = \sum_{j=1}^M \lambda_{i,j} - 1 = 0 \text{ }\forall i \in [M]$$
$$ h^c_j(\pmb{\lambda}) = \sum_{i=1}^M \lambda_{i,j} - 1 = 0 \text{ }\forall j \in [M] $$
Let $L(\pmb{\lambda}, \alpha, \beta) = f(\pmb{\lambda}) + \alpha^\top \boldsymbol{g}(\pmb{\lambda}) + \beta^\top \boldsymbol{h}(\pmb{\lambda}) $ be the Lagrangian function, where $\boldsymbol{g}(\pmb{\lambda}) = (g_{1,1}(\pmb{\lambda}),g_{1,2}(\pmb{\lambda}),\cdots,g_{M,M}(\pmb{\lambda}))$ and $\boldsymbol{h}(\pmb{\lambda}) = (h^s_1(\pmb{\lambda}),\cdots,h^s_M(\pmb{\lambda}), h^c_1(\pmb{\lambda}), \cdots, h^c_M(\pmb{\lambda})$. \\\\
Considering $\pmb{\lambda}^* = (\frac{1}{M}, \frac{1}{M}, \cdots,\frac{1}{M})_{M^2 \text{ times}}$ as a candidate optimal point, we show that the necessary conditions are verified.\\
\emph{Stationarity:} $\nabla f(\pmb{\lambda}^*) + \alpha^\top \nabla\boldsymbol{g}(\pmb{\lambda}^*) + \beta^\top \nabla\boldsymbol{h}(\pmb{\lambda}^*) = 0$. \\
Set $\alpha = \overline{0}$. 
$$\frac{\partial f(\pmb{\lambda})}{\partial \lambda_{i,j}} \bigg|_{\pmb{\lambda} = \pmb{\lambda}^*} = \frac{-\mu_{j} M^2}{\left( \sum_{j=1}^{M} \mu_j \right)^2}$$
$$\frac{\partial h^s_i(\pmb{\lambda})}{\partial \lambda_{i,j}} = 1, \frac{\partial h^c_j(\pmb{\lambda})}{\partial \lambda_{i,j}} = 1 $$
Let $\beta_i = \beta_{M + j} = \frac{\mu_{j} M^2}{\left( 2\sum_{j=1}^{M} \mu_j \right)^2}$, then the stationarity condition is satisfied.\\
\emph{Primal feasibility:} $$g_{i,j}(\pmb{\lambda}^*) = - \frac{1}{M} \leq 0 \because M>0 $$
$$h^s_i(\pmb{\lambda}^*) = \sum_{j=1}^M \frac{1}{M} - 1 = 0$$
$$h^c_j(\pmb{\lambda}^*) = \sum_{i=1}^M \frac{1}{M} - 1 = 0$$
\emph{Dual feasibility:} $$\alpha_{i,j}\geq 0\forall i,j \in [M] \because \alpha=\overline{0}$$
\emph{Complementary slackness:} $$\alpha^\top \boldsymbol{g}(\pmb{\lambda}^*) = 0 \because \alpha=\overline{0}$$

We know that $f(\pmb{\lambda})$ is a convex function, and $\boldsymbol{g}(\pmb{\lambda})$ comprise of linear and thus, convex functions, and $\boldsymbol{h}(\pmb{\lambda})$ comprises of affine functions. As a result, the necessary conditions are sufficient for optimality, giving us the desired result.
\end{proof}

We design two schedules: \emph{Policy $\calP_A$} and \emph{Policy $\calP_B$}. By enumerating all cases, we show that the AoI regret under \emph{Policy $\calP_A$} is less than that under \emph{Policy $\calP_B$}, and as a result prove Theorem~\ref{thm:oracele-m2}.

\begin{proof}[Proof of Theorem~\ref{thm:oracele-m2}]
Consider a problem instance $\instance$, where $M=2$ and a system of two coupled policies $\calP_A$ and $\calP_B$ at time $T = k_1 + 2 + k_2$, where $k_1 \geq 1\text{, }k_2 \geq 0$ and $T > 2$ \\\\
\textit{Policy $\calP_A$:} The round-robin policy that schedules source $m \in [M]$ on the channel $((m + t)\bmod{M}) + 1$ in time-slot $t$.\\\\
\textit{Policy $\calP_B$:} A policy that is not sub-optimal as per Lemmas~\ref{thm:no-col} and \ref{thm:best-M}, and makes the same scheduling decisions as Policy $\calP_A$ in time-slot $t \in [1,k_1] \cup [k_1 + 3,T]$. In $M$-length interval, $t \in [k_1 +1, k_1 + 2]$, the sources are scheduled on repeated arms from set $\mathcal{S}(2)$. 

Next, we need to show that: $$\sum_{m = 1}^{M} \E\big[a_m^{\calP_B}(T)\big] \geq \sum_{m = 1}^{M} \E\big[a_m^{\calP_A}(T)\big].$$\\
From \eqref{eq:aoi-defn}\footnote{The expressions underlined are same under policies $\calP_{\mathcal{A}}$ and $\calP_{\mathcal{B}}$},

\begin{align}
\label{eq:pol-a-b}
&\E\big[a_m^{\calP_B}(T)\big] \nn \\
=&\E\bigg[\sum_{\tau=0}^{\infty}\prod_{i=0}^{\tau} \left(1-\mu_{k_m^B(t-i)}\right)\bigg] \nn \\
=& 1 + \E\bigg[\sum_{\tau=0}^{k_2 - 1}\prod_{i=0}^{\tau}\left(1-\mu_{k_m^B(t-i)}\right) \nn \\
+&  \sum_{\tau=k_2}^{k_2 + M - 1}\prod_{i=0}^{\tau}  \left(1-\mu_{k_m^B(t-i)}\right) + \sum_{\tau=k_2 + M}^{T}\prod_{i=0}^{\tau}  \left(1-\mu_{k_m^B(t-i)}\right) \bigg] \nn \\
=& \underbracket[0.1pt]{1 + \E\bigg[\sum_{\tau=0}^{k_2 - 1}\prod_{i=0}^{\tau}\left(1-\mu_{k_m^A(t-i)}\right) \bigg] }\nn \\
+& \E \bigg[ \sum_{\tau=k_2}^{k_2 + M -1}\underbracket[0.1pt]{\prod_{i=0}^{k_2 - 1}  \left(1-\mu_{k_m^A(t-i)}\right)}\prod_{i=k_2}^{\tau}  \left(1-\mu_{k_m^B(t-i)}\right) \bigg] \nn \\
+&\E \bigg[\sum_{\tau=k_2 + M}^{T}\bigg(\underbracket[0.1pt]{\prod_{i=0}^{k_2 - 1}  \left(1-\mu_{k_m^A(t-i)}\right)}\prod_{i=k_2}^{k_2 + M-1}  \left(1-\mu_{k_m^B(t-i)}\right) \nn \\
\times& \underbracket[0.1pt]{\prod_{i=k_2+M}^{\tau}\left(1-\mu_{k_m^A(t-i)}\right)}\bigg)\bigg]. 
\end{align}
Using \eqref{eq:pol-a-b}, and $M=2$ we compute
\begin{align}
&\sum_{m = 1}^{M} \E\big[a_m^{\calP_B}(T)\big] - \sum_{m = 1}^{M} \E\big[a_m^{\calP_A}(T)\big] \nn \\
=&\sum_{m=1}^{2} \bigg\{ \sum_{\tau=k_2}^{k_2 + M -1}\prod_{i=0}^{k_2 - 1}  \left(1-\mu_{k_m^A(t-i)}\right)\bigg(\prod_{i=k_2}^{\tau}  \left(1-\mu_{k_m^B(t-i)}\right) \nn \\ 
-& \prod_{i=k_2}^{\tau}  \left(1-\mu_{k_m^A(t-i)}\right)\bigg) \nn +\sum_{\tau=k_2 + M}^{T}\bigg(\prod_{i=0}^{k_2 - 1}  \left(1-\mu_{k_m^A(t-i)}\right)\nn \\
\times&\bigg( \prod_{i=k_2}^{k_2 + M-1}  \left(1-\mu_{k_m^B(t-i)}\right) - \prod_{i=k_2}^{k_2 + M-1}  \left(1-\mu_{k_m^A(t-i)}\right) \bigg) \nn \\
\times& \prod_{i=k_2 + M}^{\tau}  \left(1-\mu_{k_m^A(t-i)}\right)\bigg) \bigg\}.
\label{eq:diff-a-b}
\end{align}
The first term in \eqref{eq:diff-a-b} can be simplified as,
\begin{align}
&\sum_{m=1}^{2} \bigg\{ \sum_{\tau=k_2}^{k_2 + M -1}\prod_{i=0}^{k_2 - 1}  \left(1-\mu_{k_m^A(t-i)}\right)\bigg(\prod_{i=k_2}^{\tau}  \left(1-\mu_{k_m^B(t-i)}\right) \nn \\
-& \prod_{i=k_2}^{\tau}  \left(1-\mu_{k_m^A(t-i)}\right)\bigg) \bigg\}.  \nn
\end{align}
\textit{Case-I: If $k_2 \bmod 2 = 0$}
\begin{align}
=&\prod_{i=0}^{k_2 - 1}  \bigg((1-\mu_{1})(1-\mu_{2})\bigg)^\frac{k_2}{2} \nn \\
\times& \bigg(\left(1 - \mu_1\right)^2 + \left(1 - \mu_2\right)^2 - 2\left(1 - \mu_1\right)\left(1 - \mu_2\right) \bigg) \geq& 0.
\label{eq:case-1}
\end{align}
\textit{Case-II: If $k_2 \bmod 2 = 1$}
\begin{align}
=& \prod_{i=0}^{k_2 - 1}  \bigg((1-\mu_{1})(1-\mu_{2})\bigg)^\frac{k_2 - 1}{2}\bigg(\left(1 - \mu_1\right)^2 + \left(1 - \mu_2\right)^2 \nn \\
-& 2\left(1 - \mu_1\right)\left(1 - \mu_2\right)\bigg) + \prod_{i=0}^{k_2 - 1}  \bigg((1-\mu_{1})(1-\mu_{2})\bigg)^\frac{k_2 - 1}{2} \nn \\
\times& \bigg(\left(1 - \mu_1\right)^3 + \left(1 - \mu_2\right)^3 - \left(1 - \mu_1\right)\left(1 - \mu_2\right)\left(\mu_1 - \mu_2\right) \bigg) \nn \\
\geq& \prod_{i=0}^{k_2 - 1}  \bigg((1-\mu_{1})(1-\mu_{2})\bigg)^\frac{k_2 - 1}{2}\bigg(\left(\mu_1 - \mu_2\right)^2\left(\mu_1 + \mu_2\right) \bigg) \nn \\
\geq& 0.
\label{eq:case-2}
\end{align}
\eqref{eq:case-1} and \eqref{eq:case-2} show that the first term in \eqref{eq:diff-a-b} is non-negative. A similar proof can be shown for the second term as well. This gives the desired result ($M=2$)-
$$ \sum_{m = 1}^{M} \E\big[a_m^{\calP_B}\big] - \sum_{m = 1}^{M} \E\big[a_m^{\calP_A}\big] \geq 0. $$

\end{proof}
\subsection{Proof of Theorems~\ref{thm:rr-best-smp} and~\ref{thm:rr-vs-iid}}
We then, generalize the proof of Theorem~\ref{thm:oracele-m2} and use the Muirhead's inequality~\cite{bullen2013handbook} to prove Theorems~\ref{thm:rr-best-smp} and \ref{thm:rr-vs-iid}. The structure of these proofs is similar to the above proof.
\begin{proof}[Proof of Theorem~\ref{thm:rr-best-smp}]
Consider a problem instance $\instance$, and a system of two coupled policies $\calP_A$ and $\calP_B$ at time $T = k_1 + M + k_2$, where $k_1 \geq 1\text{, }k_2 \geq 0$ and $T > M$ \\\\
\textit{Policy $\calP_A$:} Consider the index set $\mathcal{I}$ which is a random permutation of the arms in the set $\mathcal{S}(M)$. Policy $\calP_A$ is the  round-robin policy that schedules source $m \in [M]$ on the channel $\mathcal{I}_{((m + t)\bmod{M}) + 1}$ in time-slot $t$.\\\\
\textit{Policy $\calP_B$:} A policy, adhering to Lemmas~\ref{thm:no-col} and \ref{thm:best-M}, that makes the same scheduling decisions as Policy $\calP_A$ in time-slot $t \in [1,k_1] \cup [k_1 + M +1,T]$. In the $M$-length interval $t \in [k_1 +1, k_1 + M]$, the channels are scheduled as per a symmetric $M$-periodic policy, characterized by the sequence $\mathcal{D}^\mathcal{B}(1),\cdots,\mathcal{D}^\mathcal{B}(M)$. 

Both the policies are randomized, and constitute different realizations. Thus, the age-of-information accrued is captured by the expectation over all the random instances possible within each policy. Next, we need to show that: $$\sum_{m = 1}^{M} \E\big[a_m^{\calP_B}(T)\big] \geq \sum_{m = 1}^{M} \E\big[a_m^{\calP_A}(T)\big].$$\\
For a source $m$, using \eqref{eq:pol-a-b} we get that,
\begin{align}
\label{eq:general-diff}
   &\E\big[a_m^{\calP_B}(T)\big] -  \E\big[a_m^{\calP_A}(T)\big]   \nn \\
   =& \sum_{\tau=k_2}^{k_2 + M -1}\E \bigg[ \prod_{i=0}^{k_2 - 1}  \left(1-\mu_{k_m^A(t-i)}\right)\bigg\{ \prod_{i=k_2}^{\tau}  \left(1-\mu_{k_m^B(t-i)}\right) \nn \\
   -& \prod_{i=k_2}^{\tau}  \left(1-\mu_{k_m^A(t-i)}\right) \bigg \} \bigg] \nn \\
+&\sum_{\tau=k_2 + M}^{T}\E \bigg[\prod_{i=0}^{k_2 - 1}  \left(1-\mu_{k_m^A(t-i)}\right)\prod_{i=k_2 + M}^{\tau}\left(1-\mu_{k_m^A(t-i)}\right)  \nn \\
\times& \bigg \{ \prod_{i=k_2}^{k_2 + M-1}  \left(1-\mu_{k_m^B(t-i)}\right) - \prod_{i=k_2}^{k_2 + M-1}  \left(1-\mu_{k_m^A(t-i)}\right) \bigg \} \bigg]
\end{align}
Let us consider each of the terms inside the first summation in \eqref{eq:general-diff}. By construction, $\mathcal{D}^\mathcal{B}(\tau)$ majorizes over  $\mathcal{D}^\mathcal{A}(\tau)$. The expectation, for each policy, is equivalent to the Muirhead mean ([$\mathcal{D}(\tau)$]-mean) of the positive real numbers $1-\mu_1, 1-\mu_2, \cdots, 1-\mu_M$. Thus, by Muirhead's inequality, each of the expectation terms is positive. Similary, we can argue the same for expectation terms inside the second summation, resulting in 
\begin{align}
\label{eq:thm-smp}
\E[a_m^{\calP_B}(T)\big] -  \E\big[a_m^{\calP_A}(T)\big] \geq 0.
\end{align}
Summing for all sources we get the result, $$\sum_{m = 1}^{M} \E\big[a_m^{\calP_B}(T)\big] \geq \sum_{m = 1}^{M} \E\big[a_m^{\calP_A}(T)\big].$$
\end{proof}

\begin{proof}[Proof of Theorem~\ref{thm:rr-vs-iid}]
Consider a problem instance $\instance$, and a system of two coupled policies $\calP_A$ and $\calP_B$ at time $T = k_1 + M + k_2$, where $k_1 \geq 1\text{, }k_2 \geq 0$ and $T > M$ \vspace{0.5 em}\\
\textit{Policy $\calP_A$:} Consider the index set $\mathcal{I}$ which is a random permutation of the arms in the set $\mathcal{S}(M)$. Policy $\calP_A$ is the  round-robin policy that schedules source $m \in [M]$ on the channel $\mathcal{I}_{((m + t)\bmod{M}) + 1}$ in time-slot $t$.\vspace{0.5 em}\\
\textit{Policy $\calP_B$:} A policy, adhering to Lemmas~\ref{thm:no-col} and \ref{thm:best-M}, that makes the same scheduling decisions as Policy $\calP_A$ in time-slot $t \in [1,k_1] \cup [k_1 + M +1,T]$. In the $M$-length interval $t \in [k_1 +1, k_1 + M]$, the channels are scheduled uniformly and randomly in an i.i.d manner. 

Both the policies are randomized, and constitute different realizations. Thus, the age-of-information accrued is captured by the expectation over all the random instances possible within each policy. 

In case of policy $\calP_\mathcal{B}$, for an $l$-length interval, there are $M^l$ possible sequences of scheduled arms, and each sequence has the same probability. Suppose, $\mathcal{X}(l)$ represents the set of all possible sequences of length $l$ ( $\vert\mathcal{X}(l)\vert = M^l$ ). Let $\mathcal{X}^s(l)$ represent the set of sequences denoted by $\mathcal{D}^s(l)$, then,
\begin{align}
\label{eq:sets}
   \mathcal{X}(l) = \bigcup_{s} \mathcal{X}^s(l) \text{ and } \mathcal{X}^s_i(l) \cap \mathcal{X}^s_j(l) = \phi \text{ for } i \neq j. 
\end{align} Here the union extends over all the possible ways of constructing the vector $\mathcal{D}(l)$. Using \eqref{eq:thm-smp} and \eqref{eq:sets}, we get,
\begin{align}
\E\big[a_m^{\calP_B}(T)\big]
=& {\Big(\sum\limits_s \vert\mathcal{X}^s(T)\vert\cdot\E\big[a_m^{\calP_s}(T)\big]}\Big)\Big/{\vert\mathcal{X}(T)\vert} \nn \\
\geq& {\Big(\sum\limits_s \vert\mathcal{X}^s(T)\vert\cdot\E\big[a_m^{\calP_\mathcal{A}}(T)\big]}\Big)\Big/{\vert\mathcal{X}(T)\vert} \nn \\
=& \E\big[a_m^{\calP_\mathcal{A}}(T)\big]\cdot\sum\limits_s\vert\mathcal{X}^s(T)\vert\Big/{\vert\mathcal{X}(T)\vert} \nn \\
=& \E\big[a_m^{\calP_\mathcal{A}}(T)\big].
\end{align}
Summing for all sources we get the result, $$\sum_{m = 1}^{M} \E\big[a_m^{\calP_B}(T)\big] \geq \sum_{m = 1}^{M} \E\big[a_m^{\calP_A}(T)\big].$$
\end{proof}
\subsection{Proof of Theorem~\ref{thm:DLF}}
We first upper bound the expected cumulative source-wise AoI for any schedule by the expected cumulative source-wise AoI of an alternative schedule in which all uses of sub-optimal channels in the original schedule are replaced by all sources using (and thereby colliding on) the worst channel. Only one of these sources, at random, acquires the channel. We further upper bound the expected cumulative source-wise AoI with another schedule where all uses of the worst channel are clustered together starting from $T=1$, followed by the oracle allotments. We can express the resultant upper bound on the source-wise expected cumulative AoI in terms of the number of deviations from the oracle allotments and the number of times another source acquires the same channel as the source (resulting in a collision). This is effectively done in Lemma~\ref{lemma:expectedAgeBound}.

\begin{lemma}
	\label{lemma:expectedAgeBound}
	Let $k_m(t)$ denote the index of the communication channel used in time-slot $t$ by policy $\mathcal{P}$ and $k^*_m(t)$ be the index of the channel used by the Oracle, for source m. Let $\textbf{K}_m(T) = \{k_m(1), k_m(2), \cdots, k_m(T)\}$ be the sequence of channels used in time-slots $1$ to $T$ and
$$N_1(\textbf{K}_m(T)) = \sum_{t=1}^T  \mathbbm{1}_{k_m(t) \neq k^*_m(t)},$$
denote the number of time-slots in which a sub-optimal channel is used by source $m$ and 
$$N_2(\textbf{K}_m(T)) = \sum_{t=1}^T  \sum_{i\neq m} \mathbbm{1}_{k_i(t) = k^*_m(t)},$$ 
denote the number of time-slots the optimal arm (alloted by the oracle) was used by some other source $i \neq m$.
	Then, under Assumption \ref{assumption:initialConditions}, for a constant $c$, the regret for source m,
	\begin{align*}
	\mathcal{R}_{\mathcal{P}}^m(T)  \leq  \frac{M}{\mu_{\text{min}}} + \frac{M c \log T}{\mu_{\text{min}}} \left(1+ \E[N(\textbf{K}_m(T))]\right),
	\end{align*}
where, $N(\textbf{K}_m(T)) = N_1(\textbf{K}_m(T)) + N_2(\textbf{K}_m(T)).$
\end{lemma}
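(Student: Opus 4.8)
\textbf{Proof proposal for Lemma~\ref{lemma:expectedAgeBound}.}

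The plan is to bound the source-wise regret $\mathcal{R}_{\mathcal{P}}^m(T) = \sum_{t=1}^T \E[a_m^{\mathcal{P}}(t) - a_m^*(t)]$ by a coupling/monotonicity argument that replaces the complicated true schedule with a worst-case schedule that is much easier to analyze. First I would observe that under Assumption~\ref{assumption:initialConditions} the AoI at any time $t$ is governed purely by how recently a successful update occurred, and that the AoI process is monotone in the success probabilities of the scheduled channels: using a worse channel (smaller $\mu$) in any slot can only increase the subsequent AoI, as is evident from the recursion $a(t) = a(t-1)+1$ on failure and the product formula $\PP(a_m^{\mathcal{P}}(t) > \tau) = \prod_{i=0}^{\tau}(1-\mu_{k_m(t-i)})$ from \eqref{eq:aoi-defn}. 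Every slot counted by $N_1$ (a sub-optimal channel is pulled) or by $N_2$ (some other source grabbed source $m$'s oracle channel, so $m$ effectively transmits on the worst available channel after the collision) is a slot in which source $m$'s effective success probability drops to at most $\mu_{\text{min}}$. The total number of such ``bad'' slots is at most $N(\textbf{K}_m(T)) = N_1 + N_2$.

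The key step is the two-stage worst-case replacement described in the paragraph preceding the lemma. I would define an auxiliary schedule in which all the $N(\textbf{K}_m(T))$ bad slots are replaced by transmissions on the worst channel (success probability $\mu_{\text{min}}$), and then a further auxiliary schedule in which these bad slots are \emph{clustered together} at the very beginning ($t = 1, 2, \ldots$), followed by oracle allotments. The clustering is justified because pushing failures earlier and grouping them can only increase cumulative AoI: a run of $\ell$ consecutive worst-channel slots produces AoI values $1, 2, \ldots, \ell$ whose partial sums dominate any interleaved arrangement, again by the monotonicity/convexity of the AoI recursion. In this clustered schedule the excess AoI over the oracle is concentrated in the initial bad block, and the per-slot excess during a block of length $\ell$ sums to roughly $\sum_{j=1}^{\ell} j \cdot (\text{tail probabilities})$, which because AoI under the worst channel is geometric with mean $1/\mu_{\text{min}}$, contributes an expected excess on the order of $1/\mu_{\text{min}}$ per deviation event plus the residual settling cost.

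Assembling the bound, I would write the source-wise regret as a baseline term $M/\mu_{\text{min}}$ (accounting for the initial $M$-slot startup/settling under the round-robin oracle comparison) plus a term proportional to the expected number of deviations, each deviation costing at most $\frac{Mc\log T}{\mu_{\text{min}}}$ in accumulated future AoI; summing over the $\E[N(\textbf{K}_m(T))]$ expected deviations and the single baseline invocation gives
\begin{align*}
\mathcal{R}_{\mathcal{P}}^m(T) \leq \frac{M}{\mu_{\text{min}}} + \frac{M c \log T}{\mu_{\text{min}}}\left(1 + \E[N(\textbf{K}_m(T))]\right).
\end{align*}
The main obstacle I anticipate is rigorously justifying the clustering step: one must show that relocating and concatenating the bad slots to the front genuinely upper-bounds the original cumulative AoI. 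This requires a careful exchange/interchange argument on the AoI sequence, showing that moving a failure earlier (or merging two failure runs) never decreases the sum $\sum_t a_m(t)$, which follows from the superadditivity of the AoI cost within a failure run but needs to be stated precisely. A secondary subtlety is correctly identifying where the $\log T$ factor enters; I expect it comes not from the AoI dynamics themselves but is carried in from the UCB-based bound on $\E[N(\textbf{K}_m(T))]$ invoked later in Lemma~\ref{lemma:bound-results}, so here the $\log T$ should appear as the per-deviation amortized cost rather than being derived internally — I would keep the constant $c$ abstract and defer its origin to the cited bound.
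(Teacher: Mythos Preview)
Your proposal has a genuine gap: you have misidentified where the $\log T$ factor in the lemma comes from, and your clustering argument as written does not produce it.

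You speculate that the $\log T$ is ``carried in from the UCB-based bound'' of Lemma~\ref{lemma:bound-results}, to be deferred. That is incorrect. The $\log T$ in Lemma~\ref{lemma:expectedAgeBound} is \emph{internal} to this lemma; a second, independent $\log T$ enters later via Lemma~\ref{lemma:bound-results}, and the two multiply to give the $\log^2 T$ in Theorem~\ref{thm:DLF}. In the paper's proof the factor arises from a \emph{good-window} argument, not from clustering. For each $t$ one defines the event $E_t$ that $k_m(\tau)=k_m^*(\tau)$ for all $\tau$ in the window $[t-Mc\log T+1,\,t]$. The window length $Mc\log T$ is chosen precisely so that, conditioned on $E_t$, the tail $\big(\prod_{i=1}^M(1-\mu_i)\big)^{c\log T}\le 1/T$, which forces the conditional expected AoI to be within $M/(\mu_{\text{min}}T)$ of the oracle value; conditioned on $E_t^c$ one uses the crude bound $\E[a_m^{\mathcal P}(t)\mid E_t^c]\le M/\mu_{\text{min}}$. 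The lemma then follows from the union bound $\mathbbm{1}_{E_t^c}\le\sum_{\tau\in\text{window}}(\mathbbm{1}_{A(\tau)}+\mathbbm{1}_{B(\tau)})$ and the observation that each bad slot lies in at most $Mc\log T$ windows, which is exactly what converts $\E[N(\mathbf K_m(T))]$ into $Mc\log T\cdot\E[N(\mathbf K_m(T))]$.

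Your clustering heuristic does match the informal description the paper gives before the lemma, but the formal proof does not carry it out that way, and your write-up never explains how clustering would yield a per-deviation cost of $\tfrac{Mc\log T}{\mu_{\text{min}}}$ rather than, say, $O(1/\mu_{\text{min}})$ or $O(1/\mu_{\text{min}}^2)$. If you want to salvage the clustering route you would need a separate argument bounding the post-block settling cost and showing it contributes the stated constant; as it stands, the cost you assert is simply copied from the lemma statement without derivation. The cleaner fix is to adopt the window/good-event decomposition above, which makes the role of $c$ (namely $c'=-1/\log\,\mathrm{GM}(1-\mu_1,\dots,1-\mu_M)$) and of $\log T$ transparent.
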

The next lemma summarizes the results from Theorem 1 in~\cite{OpportunisticSpectrum} to provide upper bounds on $ N_1(\textbf{K}_m(T))$ and $ N_2(\textbf{K}_m(T))$ as per the DLF policy.
\begin{lemma}
	\label{lemma:bound-results}
	Let $k_m(t)$ denote the index of the communication channel used in time-slot $t$ by policy $\mathcal{P}$ and $k^*_m(t)$ be the index of the channel used by the Oracle, for source m. Let $\displaystyle \E_\text{DLF} \left[N_1(\mathbf{K}(T))\right]$ denote the expected number of time-slots in which a sub-optimal channel is picked in time-slots 1 to $T$ by DLF. Let $\displaystyle \E_\text{DLF} \left[N_2(\mathbf{K}(T))\right]$ denote the expected number of time-slots in which the optimal arms was picked by some other source $i \neq m$ time-slots 1 to $T$ by DLF.
	Then, for $T>N$,
	\begin{align*}
	\E_\text{DLF} \left[N_1(\mathbf{K}(T))\right] &\leq (N-1)\left(\frac{8 \log T}{\Delta^{2}} + 1 + \frac{2\pi^{2}}{3} \right), \\
	\E_\text{DLF} \left[N_2(\mathbf{K}(T))\right] &\leq (M-1)\left(\frac{8 \log T}{\Delta^{2}} + 1 + \frac{2\pi^{2}}{3} \right),
	\end{align*}
	where $\Delta = \min\limits_{ i,j \in [M]; i > j} \mu_i - \mu_{j}$.
	
\end{lemma}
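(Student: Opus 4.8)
The plan is to translate the two counts $N_1$ and $N_2$ into the "wrong selection" events of the UCB-based DLF rule and then invoke the standard Chernoff--Hoeffding concentration argument, exactly as packaged in Theorem~1 of \cite{OpportunisticSpectrum}. Recall that DLF chooses a channel in two confidence-bound steps: at source $m$ it first forms the set $\mathcal{O}_k$ of the $k$ arms maximizing the upper index $\hat\mu_n^m + \sqrt{2\log t / T_n^m(t-1)}$, and then schedules the arm in $\mathcal{O}_k$ minimizing the lower index $\hat\mu_n^m - \sqrt{2\log t / T_n^m(t-1)}$. The first observation is that source $m$ deviates from its oracle channel, i.e.\ $k_m(t)\neq k^*_m(t)$, only if one of these comparisons is corrupted: either a channel outside the best-$M$ set $\mathcal{S}(M)$ attains an upper index above that of a best-$M$ arm (set misidentification), or the lower-index ordering inside $\mathcal{O}_k$ is wrong (misordering within the set). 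Both are empirical-mean deviation events, so each deviation can be charged to a confidence-interval violation.

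First I would bound $\E_\text{DLF}[N_1(\mathbf{K}(T))]$. Fix a channel $j \neq k^*_m(t)$ whose selection forces a deviation. A standard UCB decomposition shows that, once $j$ has been sampled at least $\ell = \lceil 8\log T / \Delta^2 \rceil$ times, the event that its index overtakes the oracle allotment requires either the empirical mean of $j$ to exceed $\mu_j + c_{t,s}$ or the empirical mean of the displaced arm to fall below its true mean minus $c_{t,s'}$, where $c_{t,s} = \sqrt{2\log t / s}$. Each such event has probability at most $t^{-4}$ by Chernoff--Hoeffding, and summing over the sample counts $s, s'$ and over $t$ gives $\sum_{t\geq 1}\sum_{s,s'} 2\,t^{-4} \leq 2\sum_{t\geq 1} t^{-2} = \pi^2/3$ for each comparison step. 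Because the DLF rule performs both an upper-index and a lower-index comparison, the two steps each contribute a $\pi^2/3$ term; this is precisely the origin of the doubled constant $2\pi^2/3$ in the statement, as opposed to the usual $\pi^2/3$ for single-index UCB1. Adding the deterministic threshold $\ell$ and the trailing $+1$ yields the per-channel bound $\tfrac{8\log T}{\Delta^2} + 1 + \tfrac{2\pi^2}{3}$, and summing over the $N-1$ channels other than the oracle allotment gives the claimed bound on $\E_\text{DLF}[N_1(\mathbf{K}(T))]$.

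Next I would bound $\E_\text{DLF}[N_2(\mathbf{K}(T))]$ by the same template, but applied from the viewpoint of the intruding sources. An event counted by $N_2$ occurs only when some source $i \neq m$, using its own local estimates $\boldsymbol{\hat\mu}^i$, misorders the best-$M$ channels so that its rotating fairness index $k$ points at source $m$'s oracle channel $k^*_m(t)$. This is again an empirical-mean deviation event, now for source $i$, governed by the identical two-step confidence-bound comparison; hence each source $i$ contributes at most $\tfrac{8\log T}{\Delta^2} + 1 + \tfrac{2\pi^2}{3}$ such slots. Summing over the $M-1$ sources $i \neq m$ gives the stated bound on $\E_\text{DLF}[N_2(\mathbf{K}(T))]$.

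The main obstacle I anticipate is not the concentration arithmetic, which is routine, but faithfully mapping the decentralized, fairness-indexed DLF selection rule onto the single-player UCB1 bad-event template. One must verify that a deviation of source $m$ (for $N_1$) or an intrusion by source $i$ (for $N_2$) genuinely implies a confidence-interval violation at the worst-case separation $\Delta$ defined over $[M]$, and that the two-index structure of DLF contributes the factor of two in $2\pi^2/3$ without any spurious cross terms between the upper- and lower-index comparisons. Once these correspondences are pinned down, both bounds follow directly from Theorem~1 of \cite{OpportunisticSpectrum}, and the result feeds into Lemma~\ref{lemma:expectedAgeBound} to yield Theorem~\ref{thm:DLF}.
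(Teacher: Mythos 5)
Your proposal matches the paper's treatment of this lemma: the paper offers no independent proof, stating only that the lemma ``summarizes the results from Theorem 1 in \cite{OpportunisticSpectrum}{}'' — exactly the reduction you invoke in your closing paragraph. Your additional sketch of the internals of that cited theorem (charging deviations to per-pair confidence violations, the $(N-1)$ and $(M-1)$ decompositions for $N_1$ and $N_2$, and the two-index comparison doubling $\pi^2/3$ to $2\pi^2/3$) is a faithful reconstruction of the cited argument rather than a different route.
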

When calculating the overall regret from individual source-wise regrets, $\sum_{m=1}^{M} \mathcal{R}_{\mathcal{P}}^m$, we need to use lemma~\ref{lemma:expectedAgeBound}. However, on a closer look, one observes that for a source $m$ and an event under $N_2(\textbf{K}_m(T))$ are captured by $N_1(\textbf{K}_l(T))$ for some other source $l$. That is, $$ \sum_{m=1}^{M} N_2(\textbf{K}_m(T)) \subseteq \sum_{m=1}^{M} N_1(\textbf{K}_m(T)). $$ Thus, to avoid over-count, we capture these events only under $\sum_{m=1}^{M} N_1(\textbf{K}_m(T))$. This gives us 
\begin{align}
    \label{eq:n1n2}
    \mathcal{R}_{\mathcal{P}} & = \sum_{m=1}^{M} \mathcal{R}_{\mathcal{P}}^m \nn \\
    & \leq \sum_{m=1}^{M}  \frac{M}{\mu_{\text{min}}} + \frac{M c \log T}{\mu_{\text{min}}} \left(1+ \E[N_1(\textbf{K}_m(T))]\right).
\end{align}
\begin{remark}[Performance of DL-TS]
Lemma~\ref{lemma:expectedAgeBound} provides a generic result that shows the performance of any policy $\calP$. However, in the case of DL-TS, a generic result like Lemma~\ref{lemma:bound-results} is not available for Thompson Sampling. It is not straight-forward to adapt the proof given for Theorem 2 in \cite{kaufmann2012thompson} to our case, that is, periodic pulls of the $k^{th}$ best arm. Achieving the same remains to be an open problem.
\end{remark}

\begin{proof} [Proof of Lemma \ref{lemma:expectedAgeBound}]

$$\text{Let } \mu^* = \left(\sum_{i = 1}^{M} \mu_i \right) \big/ M.$$
By definition,
	$$
	\PP(a_m^{\mathcal{P}}(t) > \tau) = \prod_{i=0}^{\tau} \left(1-\mu_{k_m(t-i)}\right).
	$$	
	Note that since $a_m^{\calP}(t) \geq 1$ for all $t$ and m,
	$$
	\E[a_m^{\mathcal{P}}(t)] = \sum_{\tau=0}^{\infty} \PP(a_m(t) > \tau).
	$$
	It follows that,
	\begin{align}
	\E[a_m^{\mathcal{P}}(t)]= \E[\E[a_m^{\mathcal{P}}(t)]] =& \E\left[\sum_{\tau=0}^{\infty} \PP(a_m^{\mathcal{P}}(t) > \tau)\right] \nn\\
	=&\E\left[ \sum_{\tau=0}^{\infty}\prod_{i=0}^{\tau} \left(1-\mu_{k_m(t-i)}\right)\right].
	\label{eq:doubleExpectation}
	\end{align}

	For $t \geq c \log T$, we define $E_t$ as the event that $k_m(\tau) = k^*_m(\tau)$ for $t-c \log T + 1 \leq \tau \leq t$. Then,
	\begin{align*}
	\E &\left[ \sum_{\tau=0}^{\infty}\prod_{i=0}^{\tau} \left(1-\mu_{k_m(t-i)}\right)\bigg | E_t\right] \\
	&\leq \sum_{i=1}^{c \log T}\prod_{j=1}^{i}(1-\mu^{*}) \\
	&\hspace{.2in}+ \sum_{i=c \log T+1}^{\infty}
	 (1-\mu^{*})^{c \log T}\prod_{j=c \log T+1}^{i}(1-\frac{\mu_{\text{min}}}{M}).
	\end{align*}

For $t \geq M c \log T$, we define $E_t$ as the event that $k_m(\tau) = k^*_m(\tau)$ for $t-M c \log T + 1 \leq \tau \leq t$. Then,
\begin{align*}
	&\E\left[ \sum_{\tau=0}^{\infty}\prod_{i=0}^{\tau} \left(1-\mu_{k_m(t-i)}\right)\bigg | E_t\right] \\
	&\leq \sum_{i=1}^{M c \log T}\prod_{j=1}^{i}(1-\mu_{k_m^*}(t)) \\
	&\hspace{.05in}+ \sum_{i=M c \log T+1}^{\infty}
	 \left(\prod_{m=1}^{M}(1-\mu_m)\right)^{c \log T}\prod_{j= M c \log T+1}^{i}(1-\frac{\mu_{\text{min}}}{M}).
	\end{align*}
	Here, the worst case is that all sources collide at the arm with the lowest transmission probability. Based on our collision model, a source $m$ will acquire this channel with probability $1/M$.
	Further note that,
	\begin{align*}
	&\hspace{0.1in} \sum_{i=M c \log T+1}^{\infty}\left(\prod_{m=1}^{M}(1-\mu_m)\right)^{c \log T}\prod_{j=c \log T+1}^{i}(1-\frac{\mu_{\text{min}}}{M}) \\
	&\hspace{0.18in} \leq \left(\prod_{m=1}^{M}(1-\mu_m)\right)^{c \log T}\frac{M}{\mu_{\text{min}}} = \frac{M}{\mu_{\text{min}}T}\\
	&\hspace{0.18in} \text{and assign}\sum_{i=1}^{M c \log T}\prod_{j=1}^{i}(1-\mu_{k_m^*}(t)) = \phi.
	\end{align*}
	It follows that,
	\begin{align}
	\label{eq:age_in_the_good_case-2}
	&\E\left[ \sum_{\tau=0}^{\infty}\prod_{i=0}^{\tau} \left(1-\mu_{k(t-i)}\right)\bigg | E_t\right] \leq \phi +\frac{M}{\mu_{\text{min}}T}.
	\end{align}
	Moreover, since c$\mu_{k(t)}^m \geq \mu_{\min}/M$, for all $t$, 
	\begin{align}
	\label{eq:age_in_the_bad_case-2}
	&\E\left[ \sum_{\tau=0}^{\infty}\prod_{i=0}^{\tau} \left(1-\mu_{k(t-i)}\right)\bigg | E_t^{c}\right] \leq \frac{M}{\mu_{\text{min}}}.
	\end{align}	
	\noindent Note that
	\begin{align}
	E_{t}^c &= \bigcup_{\tau=t - c \log T + 1}^{t} \bigg\{A(\tau) \cup B(\tau) \bigg\} \nonumber \\
	\mathbbm{1}_{E_{t}^c } &\leq \sum_{\tau=t - c \log T + 1}^{t} \bigg( \mathbbm{1}_{A(\tau)} + \mathbbm{1}_{B(\tau)}  \bigg).
	\label{eq:expectation_indicator-2}
	\end{align}
	
	\noindent From \eqref{eq:doubleExpectation}, \eqref{eq:age_in_the_good_case-2}, \eqref{eq:age_in_the_bad_case-2}, and \eqref{eq:expectation_indicator-2}, and set  $\phi(T- M c \log T) = \phi^{'}$.\\
	Notice, constant $c$ scales with M, so define\\ 
	$c^{'} = \frac{-1}{\log (GM\left((1 - \mu_1), \cdots, (1 - \mu_M)\right) )} $, thus, $c^{'} = M c$
	\begin{align}
	\label{eq:usefulForQ}
	&\sum_{t = 1}^{T} \E[a_m^{\mathcal{P}}(t)]  \nn \\
	&= \sum_{t = 1}^{ M c \log T} \E[a_m^{\mathcal{P}}(t)] + \sum_{t =M c \log T + 1 }^{T} \E[a_{m}^{\mathcal{P}}(t)] \nn\\
	&\leq \frac{M c \log T}{\mu_{\text{min}}} + \phi^{'} +\frac{M (T- Mc \log T)}{\mu_{\text{min}}T}\nn \\
	&\hspace{.15in} +\frac{M}{\mu_{\text{min}}} \E \left[\sum_{t = M c \log T + 1}^{T} \mathbbm{1}_{E_{t}^c } \right]\\
	& \leq \frac{M (c^{'} \log T + 1)}{\mu_{\text{min}}} + \phi^{'}+ \frac{M}{\mu_{\text{min}}} \nn \\
	&\hspace{.15in}\times \left( \sum_{t = c^{'} \log T}^{T} \E \left[\sum_{\tau=t - c^{'} \log T + 1}^{t} \big( \mathbbm{1}_{A(\tau)} + \mathbbm{1}_{B(\tau)} \big)\right] \right) \nn\\
	& \leq \phi^{'} +\frac{M (c^{'}  \log T + 1)}{\mu_{\text{min}}} +\frac{M c^{'}  \log T}{\mu_{\text{min}}} \E \left[\sum_{t = 1}^{T}  \big( \mathbbm{1}_{A(t)} + \mathbbm{1}_{B(t)} \big) \right]\nn \\
	& \leq  \phi^{'}+ \frac{M}{\mu_{\text{min}}} + \frac{M (c^{'}  \log T)}{\mu_{\text{min}}} \left(1+ \E \left[ \sum_{t = 1}^{T}  \big( \mathbbm{1}_{A(t)} + \mathbbm{1}_{B(t)} \big) \right]\right)\nn\\
	& \leq \phi^{'} + \frac{T}{\mu^{*}} + \frac{M}{\mu_{\text{min}}} + \frac{M (c^{'} \log T)}{\mu_{\text{min}}} \bigg(1+ \E \bigg[ \sum_{t = 1}^{T}\big(\mathbbm{1}_{A(t)} \nn\\
	&\hspace{1.8in} + \sum_{i\neq m} \mathbbm{1}_{B_i(t)}\big) \bigg]\bigg)\nn\\
	& =  \phi^{'} + \frac{T}{\mu^{*}} + \frac{M}{\mu_{\text{min}}} + \frac{M (c^{'} \log T)}{\mu_{\text{min}}} \bigg(1+\E\big[N_1(\textbf{K}_m(T)) \nn\\
	&\hspace{1.8in} + N_2(\textbf{K}_m(T))\big]\bigg) \nn\\
	& =  \phi^{'} + \frac{M}{\mu_{\text{min}}} + \frac{M (c^{'}  \log T)}{\mu_{\text{min}}} (1+\E\left[N(\textbf{K}(T))\right]).
	\end{align}
Regret is defined as
\begin{align}
	\mathcal{R}^m_{\mathcal{P}}(t) =& \sum_{t = 1}^{T} \E[a_m^{\mathcal{P}}(t)]  - \sum_{t = 1}^{T} \E[a_m^*(t)] \nn \\
	\leq& \hspace{.2in}\sum_{t = 1}^{T} \E[a_m^{\mathcal{P}}(t)] -\sum_{t = 1 + M c \log T}^{T} \E[a_m^*(t)] \nn \\
	\leq& \hspace{.1in}\sum_{t = 1}^{T} \E[a_m^{\mathcal{P}}(t)] - \sum_{t = 1 + M c \log T}^{T} \phi \nn \\
	=& \sum_{t = 1}^{T} \E[a_m^{\mathcal{P}}(t)] - \phi^{'} \nn \\
	=& \frac{M}{\mu_{\text{min}}} + \frac{M (c^{'}  \log T)}{\mu_{\text{min}}} (1+\E\left[N(\textbf{K}(T))\right]).
	\end{align}

\end{proof}
\section{Conclusion}
We model a multi-source multi-channel setting using Age-of-Information bandits with decentralized users. We first characterize the oracle policy, namely, the round-robin policy, and demonstrate that the upper-bound of AoI-regret of the existing Distributed Learning with Fairness (DLF) policy scales as $\OO(M^2 N \log^2 T)$. However, proving the optimality of the oracle policy for the general setting is still an open problem. We then propose two other AoI-agnostic policies: Distributed Learning-based Thompson Sampling (DL-TS) and Distribute Learning-based Hybrid policy (DLH), with different trade-offs between exploration and the number of collisions. These policies only utilize the past arm-pulls in deciding the next arm. We also present AoI-aware policies that incorporate the current value of AoI into the arm selections. Through simulations, we show that the AoI-aware policies lead to fewer collisions and thus, outperform their agnostic counterparts.

\section*{Acknowledgment}
We thank Shourya Pandey of the Department of Computer Science and Engineering at Indian Institute of Technology Bombay for mathematical insights in the proofs.

\bibliographystyle{IEEEtran}
\bibliography{conference_101719}

\end{document}